\let\old@abstract\abstract
\let\old@endabstract\endabstract
\renewenvironment{abstract}{%
	\old@abstract
	\small
}{%
	\old@endabstract
}
\theoremstyle{plain}
\newtheorem{theorem}{Theorem}[section]
\newtheorem{corollary}[theorem]{Corollary}
\newtheorem{lemma}[theorem]{Lemma}
\newtheorem{proposition}[theorem]{Proposition}
\theoremstyle{definition}
\newtheorem{definition}[theorem]{Definition}
\newtheorem{SA}[theorem]{Standing Assumption}
\newtheorem{example}[theorem]{Example}
\theoremstyle{remark}
\newtheorem{remark}[theorem]{Remark}
\newtheorem{discussion}[theorem]{Discussion}
\numberwithin{equation}{section}
\DeclareMathOperator{\IR}{\mathbb{R}} 
\newcommand{\rd}{\mathrm{d}}
\newcommand{\vd}{\,\mathrm{d}}
\newcommand{\on}{\operatorname}
\newcommand{\m}{\mathfrak{m}}
\newcommand{\s}{\mathfrak{s}}
\newcommand{\bR}{\mathbb{R}}
\newcommand{\q}{\mathfrak{q}}
\newcommand{\si}{{\on{si}}}
\newcommand{\ac}{{\on{ac}}}
\newcommand{\1}{\mathbbm{1}} 
\newcommand{\cF}{\mathcal{F}}
\renewcommand{\P}{\mathbb{P}} 
\newcommand{\Y}{Y}
\newcommand{\U}{U}
\renewcommand{\S}{S}
\renewcommand{\l}{\alpha}
\renewcommand{\r}{\beta}
\newcommand{\otheta}{\bar{\theta}}
\newcommand{\fmu}{\m^U_\ac} 
\newcommand{\fqpp}{\q''_\ac} 
\newcommand{\IP}{\mathsf{IP}}
\newcommand{\QVIP}{\mathsf{QVIP}}
\newcommand{\cA}{\mathcal{A}}
\newcommand\llambda{{\mathchoice
		{\lambda\mkern-4.5mu{\raisebox{.4ex}{\scriptsize$\backslash$}}}
		{\lambda\mkern-4.83mu{\raisebox{.4ex}{\scriptsize$\backslash$}}}
		{\lambda\mkern-4.5mu{\raisebox{.2ex}{\footnotesize$\scriptscriptstyle\backslash$}}}
		{\lambda\mkern-5.0mu{\raisebox{.2ex}{\tiny$\scriptscriptstyle\backslash$}}}}}
\newcommand{\indic}[1]{{\mathbbm{1}}_{#1}}
\newenvironment{enuroman}
{\begin{enumerate}[label=\textup{(\roman*)}]}
	{\end{enumerate}}
\newcommand{\hemail}[1]{\href{mailto:{#1}}{#1}}
\begin{document}
	
	\title[On the structure of increasing profits in a general diffusion market]{On the structure of increasing profits in a 1D general diffusion market with interest rates}
	
	\author[A. Anagnostakis]{Alexis Anagnostakis}
	\address{
		A. Anagnostakis -- Université de Lorraine, CNRS, IECL 
		F-57000, Metz, France.
	}
	\email{\hemail{alexis.anagnostakis@univ-grenoble-alpes.fr}}
	
	\author[D. Criens]{David Criens}
	\address{D. Criens -- University of Freiburg, Ernst-Zermelo-Str. 1, 79104 Freiburg, Germany.}
	\email{\hemail{david.criens@stochastik.uni-freiburg.de}}
	
	\author[M. Urusov]{Mikhail Urusov}
	\address{M. Urusov -- University of Duisburg-Essen, Thea-Leymann-Str. 9, 45127 Essen, Germany.}
	\email{\hemail{mikhail.urusov@uni-due.de}}
	
	\keywords{Increasing profit; value process; general diffusion; scale function; speed measure; interest rate}
	
	\makeatletter
	\@namedef{subjclassname@2020}{\textup{2020} Mathematics Subject Classification}
	\makeatother
	
	\subjclass[2020]{60J60; 91B70; 91G15; 91G30.}
	
	\thanks{}
	
	\date{\today}
	
	\allowdisplaybreaks
	\frenchspacing
	
	\begin{abstract}
In this paper, we investigate a financial market model consisting of a risky asset, modeled as a general diffusion parameterized by a scale function and a speed measure, and a bank account process with a constant interest rate. This flexible class of financial market models allows for features such as reflecting boundaries, skewness effects, sticky points, and slowdowns on fractal sets. For this market model, we study the structure of a strong form of arbitrage opportunity called {\em increasing profits}. Our main contributions are threefold. First, we characterize the existence of increasing profits in terms of an auxiliary deterministic signed measure $\nu$ and a canonical trading strategy $\theta$, both of which depend only on the deterministic parametric characteristics of our model, namely the scale function, the speed measure, and the interest rate. More precisely, we show that an increasing profit exists if and only if $\nu$ is nontrivial, and that this is equivalent to $\theta$ itself generating an increasing profit. Second, we provide a precise characterization of the entire set of increasing profits in terms of $\nu$ and $\theta$, and moreover characterize the value processes associated with increasing profits. Finally, we establish novel connections between no-arbitrage theory and the general theory of stochastic processes. Specifically, we relate the failure of the representation property for general diffusions to the existence of certain types of increasing profits whose value processes are dominated by the quadratic variation measure of a space-transformed version of the asset price process.
	\end{abstract}
	
	\maketitle

	\section{Introduction}
	\label{sec_intro}
	
	Diffusion models with non-standard path properties, such as reflection, stickiness or skewness, have proven to be valuable tools for modeling a wide range of economic and financial scenarios. Prominent examples for applications of models with reflecting boundaries include portfolio protection mechanisms, where capital is added once the portfolio value reaches a prescribed threshold (\cite{Gerber_00}); withdrawal strategies designed to secure a minimal level of income prior to retirement (\cite{KSW_10}); or situations in which central bank interventions aim to maintain exchange rates above a lower bound (\cite{NS_16}). Furthermore, diffusions with sticky points are able to capture possible takeover offers (\cite{CU22}) and models with skewness naturally arise in the context of local volatility models and have been linked to the so-called “steep short end of the smile” phenomenon~(\cite{GS_17,P_19}).
	
	The recent paper \cite{BDH23} has drawn attention to the fact that such models may admit particularly strong forms of arbitrage, so-called {\em increasing profits}, characterized by an increasing value process whose terminal value is positive with positive probability. In our previous article \cite{ACU25}, we provided a comprehensive analysis of the existence and absence of increasing profits for general one-dimensional diffusion models with a single risky asset, modeled as a general diffusion in the sense of It\^o and McKean \cite{ItoMcKean96}, and a bank account with constant interest rate. Our results provide a characterization of the \emph{no increasing profit (NIP)} condition in terms of the deterministic characteristics of the underlying general diffusion, the scale function and the speed measure. 
	
	In the present paper, we go a step further and study the {\em structural foundations} of increasing profits within such a general diffusion framework. Our goal is a precise description of the set of increasing profits and their corresponding value processes. In this regard, our results are particularly relevant in models where the no-arbitrage condition NIP fails, such as the diffusion models studied in \cite{GS_17,Gerber_00,KSW_10,NS_16,P_19}, where increasing profits naturally occur.
	Our analysis also provides an alternative approach to our previous deterministic characterization of NIP from \cite{ACU25} whose proofs relied on the fundamental weak structure conditions (\cite{Fontana15}), which are not used in the present paper. 
	
	Let us explain the results from this paper in a more precise manner.
	Our contributions are threefold. The main mathematical objects of interest are an auxiliary signed measure \(\nu\) and a related trading strategy \(\theta\), both depending only on the interest rate, the scale function and speed measure of the risky asset. 
	
	Our first main result reveals their canonical importance showing that an increasing profit exists if and only if \(\theta\) itself constitutes an increasing profit. We further show that this is equivalent to the fact that \(\nu\) is a nontrivial signed measure.
	The second part of this result recovers our characterization for NIP that we established in \cite{ACU25}. 
	By means of several examples, including Black--Scholes and Bachelier models featuring absorbing or reflecting boundaries, sticky points and skewness effects, we relate each component of \(\nu\) to specific path properties of the underlying diffusion, thereby clarifying exactly which path properties lead to increasing profits and how they can be exploited. In this context, we make the interesting observation that although value processes of increasing profits cannot be dominated by the quadratic variation measure of the asset price process (as otherwise they would be zero identically; see Lemma~\ref{lem: no domination} below), it is possible that such value processes are dominated by the quadratic variation measure of a space transformation of the asset process, a diffusion on natural scale. We call such strategies quadratic variation increasing profits. This appears to be a curiosity of our general diffusion framework, as this feature cannot be observed in classical SDE models, even under the very weak Engelbert--Schmidt conditions.  
		
		As a second main contribution, we characterize the entire class of increasing profits in relation to the canonical strategy \(\theta\) and the signed measure \(\nu\). In particular, we show that increasing profits can only be generated during times when \(\theta\) does not vanish, which underlines its structural importance.
		Moreover, we obtain an explicit representation for the value processes of any increasing profit, linking them to \(\nu\) and to the local time process of the underlying diffusion.
		
		Lastly, we reveal an intrinsic relation between the failure of the representation property (RP)
		for the risky asset and the existence of quadratic variation increasing profits.
        The RP is known to be of fundamental importance in the context of market completeness and also from the viewpoint of the general theory of stochastic processes.
		More specifically, we identify a broad framework in which the existence of quadratic variation increasing profits is equivalent to the failure of the~RP. In general, however, the NIP condition and the RP are in a general position, meaning that neither of them implies the other.

	\emph{Outline.}\quad The paper is structured as follows. In Section~\ref{sec_preliminaries} we introduce our financial market model and recall the notion of increasing profits.
	 Our main results are presented in Section~\ref{sec: main}, the proofs are deferred to Section~\ref{sec:proofs}, and the examples are discussed in Section~\ref{sec: examples}. 
	In the concluding Section~\ref{sec: RP} we discuss the relation between the existence of quadratic variation increasing profits and the failure of the representation property.

	\section{The Financial Market and the Concept of Increasing Profits}
	\label{sec_preliminaries}

	\subsection{The Financial Market}
	In this paper, we consider a financial market driven by a regular
	continuous strong Markov process, which is alternatively called a
	\emph{general diffusion}.
	A quite complete overview on the theory of general diffusions can be found in the seminal monograph \cite{ItoMcKean96} by It\^o and McKean.
	Shorter textbook introductions are given in \cite{freedman,Kal21,RevYor,RW2}.
	
	As the concepts of scale and speed are crucial for our results, we recall some facts about them without going too much into detail.
	We take a state space \(J \subset \bR\) that is supposed to be a bounded or unbounded, closed, open or half-open interval. A scale function is a strictly increasing continuous function $\s\colon J\to\mathbb R$
	and a speed measure is a measure $\m$ on $(J,\mathcal B(J))$ that satisfies
	$\m([a,b])\in(0,\infty)$ for all $a<b$ in $J^\circ$, where \(J^\circ\) denotes the interior of~\(J\). We define
	$$
	\l\triangleq\inf J\in[-\infty,\infty)
	\quad\text{and}\quad
	\r\triangleq\sup J\in(-\infty,\infty].
	$$
	The values $\s(\l)$ and $\s(\r)$ are defined by continuity (in particular, they can be infinite).
	We also remark that the speed measure can be infinite near $\l$ and $\r$, and that the values $\m(\{\l\})$ and \(\m (\{\r\})\) can be anything in $[0,\infty]$ provided $\l\in J$ and $\r\in J$, respectively. 
	
	Before we proceed, let us mention that speed measures and semimartingale local times are not scaled consistently in the literature.
	For the speed measure, we use the scaling from the books of Kallenberg~\cite{Kal21} and Rogers and Williams~\cite{RW2}, which is half the speed measure from the monographs of 
	Freedman~\cite{freedman},
	It\^o and McKean~\cite{ItoMcKean96}
	and Revuz and Yor \cite{RevYor}. To give an example, our speed measure of Brownian motion (on natural scale) is simply the Lebesgue measure, while it is twice the Lebesgue measure in \cite{freedman,ItoMcKean96,RevYor}. Similarly, we use the semimartingale local time scaling of Freedman~\cite{freedman}, Kallenberg~\cite{Kal21}, Revuz and Yor~\cite{RevYor} and Rogers and Williams~\cite{RW2}, which is twice the local time of It\^o and McKean~\cite{ItoMcKean96} and Karatzas and Shreve~\cite{KarShr}. 
	Furthermore, we emphasize that we always use the right-continuous version of the semimartingale local time
	(in the space variable).
	
	\smallskip 
	We are in a position to explain our financial framework.
	Throughout this paper, we consider a finite time horizon \(T \in (0, \infty)\).
	Let \(\mathbb{B} = (\Omega, \cF, \mathbf{F} = (\cF_t)_{t \in [0, T]}, \P)\) be a filtered probability space with a right-continuous filtration that supports a regular continuous strong Markov process (in the sense of \cite[Section~V.45]{RW2} except that the underlying setting needs not to be the canonical one) \(\Y = (\Y_t)_{t \in [0, T]}\) with state space~\(J\), scale function~\(\s\), speed measure~\(\m\) and deterministic starting value
	$x_0$. As for the starting value, we always assume that
	$$
	\text{either }x_0\in J^\circ\text{ or }x_0\in J\setminus J^\circ\text{ is a reflecting boundary for }Y.
	$$
	We exclude the case of an absorbing starting value $x_0\in J\setminus J^\circ$, since then the process \(\Y\) is simply constant.
	In the above context, the strong Markov property refers to the filtration~\(\mathbf{F}\).
	
	\begin{SA} \label{SA: semi + boundary}
		\(\Y\) is a semimartingale on the stochastic basis \(\mathbb{B}\).
	\end{SA}
	
	The Standing Assumption~\ref{SA: semi + boundary} is not automatically true in our general diffusion setting. For example, if \(B\) is a Brownian motion starting in zero, then $\sqrt{|B|}$ is a general diffusion but \emph{not} a semimartingale (\cite[Exercise~VI.1.14]{RevYor}). 
	The semimartingale property of \(\Y\) is solely a property of the scale function \(\s\), more precisely, but equivalently, its inverse.
	The following lemma collects some properties that are proved in \cite[Section~5]{CinJacProSha80}.

	Recall that for an open interval $I\subset\bR$ and a real-valued function \(\mathfrak{f}\colon I\to\bR\) that is the difference of two convex functions on $I$, one can define the second derivative measure \(\mathfrak{f}'' (\rd x)\) by 
	\[
	\mathfrak{f}'' ((x, y]) \triangleq \mathfrak{f}'_+ (y) - \mathfrak{f}'_+(x), \quad x < y\text{ in }I, 
	\]
	where \(\mathfrak{f}'_+\) denotes the right derivative of~\(\mathfrak{f}\). 
	
	\begin{lemma} \label{lem: isf DC}
		Assume that \(\Y\) is a semimartingale. Then, the inverse scale function \(\q \triangleq \s^{-1}\) is the difference of two convex functions on the interior \(\s (J^\circ)\). Furthermore, in case \(J = [\l, \infty)\) and \(\l\) is absorbing for \(\Y\), it holds that
		\[
		\int_{\s (\l) +} (x - \s (\l)) \, |\q''| (\rd x) < \infty. 
		\]
		In case \(J = [\l, \infty)\) and \(\l\) is reflecting for \(\Y\), the second derivative measure \(\q'' (\rd x)\) can be identified with a finite signed measure on every interval \([\s (\l), \s (z)]\) with \(z \in (\l,\infty)\). 
	\end{lemma}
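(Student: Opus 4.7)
The plan is to pass to the space-transformed process $Z \triangleq \s(Y)$. Since $\s$ is a continuous strictly increasing scale function, $Z$ is a general diffusion on natural scale with values in $\s(J)$; in particular, prior to hitting (or exiting at) a boundary, $Z$ is a continuous local martingale, being a continuous time change of a Brownian motion. Because $\q = \s^{-1}$ satisfies $Y = \q(Z)$, the entire problem reduces to characterising when a function of the continuous local martingale $Z$ is a semimartingale.

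For part (i), I would appeal to the classical \c{C}inlar--Jacod--Protter--Sharpe dichotomy: a function $f$ defined on an open interval $I \subset \bR$ turns every continuous local martingale that visits $I$ into a semimartingale if and only if $f$ is the difference of two convex functions on $I$. By the regularity assumption on $Y$, the process $Z$ visits every compact subinterval of $\s(J^\circ)$ with positive probability, so the assumed semimartingale property of $Y = \q(Z)$ forces $\q$ to be the difference of two convex functions on $\s(J^\circ)$.

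For parts (ii) and (iii), I would combine (i) with the It\^o--Tanaka formula applied in the interior,
\[
\q(Z_t) = \q(Z_0) + \int_0^t \q'_+(Z_s) \, \rd Z_s + \tfrac{1}{2} \int L^x_t(Z) \, \q''(\rd x),
\]
and pass to the limit as the integration domain grows toward $\s(\l)$. For the integral against $|\q''|$ to define a finite-variation process on $[0,T]$, one needs integrability of $|\q''|$ against the typical magnitude of $L^x_t(Z)$ near $\s(\l)$. In the absorbing case, $L^x_t(Z)$ vanishes linearly in $x - \s(\l)$ as $x \downarrow \s(\l)$, which yields~(ii). In the reflecting case the local time $L^{\s(\l)}_t(Z)$ is strictly positive and $x \mapsto L^x_t(Z)$ is bounded away from zero on compacts abutting the boundary, so the semimartingale property of $Y$ forces $|\q''|([\s(\l),\s(z)]) < \infty$ for every $z \in (\l,\infty)$, i.e.\ (iii).

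The main technical hurdle will be the boundary analysis: one needs sharp asymptotics for $L^x_t(Z)$ as $x \to \s(\l)$, carefully distinguishing the absorbing from the reflecting regime, and must argue that the resulting integrability conditions on $|\q''|$ are \emph{necessary} and not merely sufficient for the semimartingale property. These delicate local-time estimates are exactly what is worked out in \cite[Section~5]{CinJacProSha80}, so after performing the reduction $Z = \s(Y)$ the cleanest route is to invoke their results directly rather than re-deriving the boundary asymptotics from scratch.
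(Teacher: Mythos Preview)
Your proposal is correct and takes essentially the same approach as the paper: the paper's proof consists of a single sentence stating that the claims follow directly from the discussion in \cite[Section~5]{CinJacProSha80}, and your proposal arrives at the same citation after spelling out the reduction $Y=\q(Z)$ and the heuristic local-time analysis. In effect you have expanded the paper's one-line citation into a motivated sketch, but the substance is identical.
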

	
	\begin{proof}
		These statements follow directly from the discussion in \cite[Section~5]{CinJacProSha80}. 
	\end{proof}
	Of course, suitable adjustments of the last two statements from Lemma~\ref{lem: isf DC} hold also for more general state spaces~\(J\).

	\smallskip
	In the following, our financial market is supposed to contain one risky asset that is given by the general diffusion semimartingale \(\Y\). 
	Furthermore, we fix a deterministic interest rate \(r \in \bR\). The discounting will be done by the usual bank account process \(e^{r t}\) for \(t \in [0, T]\), leading to the discounted price process \(\S = (\S_t)_{t \in [0, T]}\) given by
	\begin{align} \label{eq:311025a1}
		\S_t \triangleq e^{- rt} \, \Y_t, \quad t \in [0, T].
	\end{align}
	We now recall the concept of increasing profits, which is the central arbitrage concept studied in this paper.

	\subsection{Increasing Profits} 
	Let us now recall the ``no increasing profit'' NIP condition, which is similar to the ``no unbounded increasing profit'' first introduced by Karatzas and Kardaras in \cite{KaraKard07}.
	Our presentation follows Fontana \cite{Fontana15}.
	In the sequel we use the notation $L(\S)$ for the set of all predictable processes that are integrable w.r.t. the continuous semimartingale \(\S\).
	The elements $H\in L(\S)$ are alternatively called \emph{strategies}. The integral process
	\[
	V^H_t \triangleq \int_0^t H_s \vd \S_s, \quad t \in [0, T], 
	\] 
	is called the {\em value process} associated to the strategy \(H \in L (\S)\).

	\begin{definition}
		A strategy \(H \in L(\S)\) is called an {\em increasing profit} if 
		\begin{equation}
			\begin{aligned}
				\P\text{-a.s.},\; [0, T] \ni t \mapsto V^H_t \text{ is increasing, and } \P\big( V^H_T > 0 \big) > 0.
			\end{aligned}
		\end{equation}
		We denote the set of all such strategies by $\IP $:
		\[
		\IP \triangleq \big\{ H \in L (\S) \colon H \text{ is an increasing profit} \big\}.
		\] 
		If \(\IP = \emptyset\), we say that the \emph{NIP} condition holds. 
	\end{definition}
	
	\begin{remark} \label{rem: unbounded} 
		If \(\IP\neq \emptyset\), the set \(\{ V^H_T \colon H \in \IP \}\) is unbounded with positive probability, i.e., 
		\[
		\P \Big( \sup_{H \, \in \, \IP} V^H_T = \infty \Big) > 0.
		\]  
		This follows directly from the fact that \(\IP\) is a cone.
	\end{remark}
		
		\begin{remark} \label{rem: FV profit}
			While increasing profits have an obvious financial interpretation,
			from the viewpoint of the general theory of stochastic processes
			the following question seems to be more natural:
			\emph{When does a strategy $H\in L(\S)$ exist such that $V^H$ is a non-constant finite variation process?}
			
			We recall the answer to this question in the more general setting where the discounted price is a continuous semimartingale
			(notice that, from the next section on, we again consider the discounted general diffusion setting~\eqref{eq:311025a1} only):
			
			Let $S'=(S'_t)_{t\in[0,T]}$ be a continuous semimartingale.
			Then there exists an increasing profit $H\in L(\S')$
			if and only if
			there exists a trading strategy $K\in L(\S')$
			whose value process $\int_0^\cdot K_s \vd \S'_s$
			is of finite variation and non-constant with positive probability.
			This is seen by inspecting the proof of \cite[Theorem~3.1]{Fontana15}
			(cf. \cite[Remark~3.1]{Fontana15}).
			
			It is worth noting that this equivalence is no longer true if the asset price process is allowed to have jumps.
			Indeed, let $N=(N_t)_{t\in[0,T]}$ be a Poisson process with intensity~$1$.
			Consider the compensated Poisson process model
			\begin{equation}\label{eq:311025a4}
				\S'_t=N_t-t,\quad t\in[0,T].
			\end{equation}
			As $\S'$ is a martingale, NIP holds, that is, increasing profits do not exist in this model.
			On the other hand, the strategy $H\equiv1$ produces a non-constant value process of finite variation: $V^H=\S'$.\footnote{We also observe that this strategy is \emph{admissible} in the sense that its value process is bounded from below by a deterministic constant ($T$).
				In other words, we cannot save that equivalence for c\`adl\`ag semimartingales by considering only admissibile strategies.}
		\end{remark} 
		
		In the realm of the previous remark, the following lemma reveals some intrinsic structure underlying value processes of increasing or finite variation profits.
		
		\begin{lemma} \label{lem: no domination} 
			Let \(S' = (S'_t)_{t \in [0,T]}\) be a continuous semimartingale and take \(H \in L (S')\) such that a.s. \(\int_0^\cdot H_s \vd S'_s\) is of finite variation and dominated by the quadratic variation measure~\(\rd \langle S' \rangle\). Then, a.s. \(\int_0^\cdot H_s \vd S'_s = 0\).
		\end{lemma}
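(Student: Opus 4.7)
My plan is to split $S'$ into its continuous local martingale and finite variation parts, use the finite variation assumption on $V$ to force the integral against the martingale part to vanish, and then exploit the domination hypothesis via a pathwise Lebesgue decomposition against $\rd\langle M \rangle$.

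First, I would write the unique continuous semimartingale decomposition $S' = S'_0 + M + A$ with $M$ a continuous local martingale and $A$ a continuous finite variation process, both null at $0$; since $H \in L(S')$, both $\int H \, \rd M$ (as a stochastic integral) and $\int H \, \rd A$ (as a Stieltjes integral) are well-defined with $\int H \, \rd S' = \int H \, \rd M + \int H \, \rd A$. Because $\int H \, \rd S'$ is of finite variation a.s. and $\int H \, \rd A$ is as well, the continuous local martingale $\int H \, \rd M$ is itself of finite variation starting at $0$, hence identically zero. Taking quadratic variation yields $\int_0^T H_s^2 \, \rd \langle M \rangle_s = 0$ a.s., so $H = 0$ a.s. with respect to the random measure $\rd\langle M \rangle$.

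Next, with $V = \int H \, \rd A$, the pathwise total variation measure of $V$ satisfies $\rd|V| = |H| \, \rd |A|$, while the hypothesis translates to $\rd|V| \ll \rd\langle S'\rangle = \rd\langle M \rangle$ (using $\langle S' \rangle = \langle M \rangle$, as $A$ is continuous and of finite variation). I would then carry out the pathwise Lebesgue decomposition $\rd|A| = h \, \rd\langle M \rangle + \rd\rho$ with $\rho \perp \langle M \rangle$. The absolutely continuous part contributes $\int_0^T |H_s| h_s \, \rd \langle M \rangle_s = 0$ since $H$ vanishes $\rd\langle M \rangle$-a.e.; the singular part is carried by a Borel set $N$ with $\langle M \rangle(N) = 0$, so the absolute continuity of $\rd|V|$ with respect to $\rd\langle M \rangle$ forces $\int |H| \, \rd\rho = |V|(N) = 0$. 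Adding the two contributions, $|V|_T = \int_0^T |H_s|\,\rd|A|_s = 0$, and hence $V \equiv 0$.

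Everything is carried out pathwise for $\P$-a.e. $\omega$, so the argument uses only standard real-analytic properties of Lebesgue--Stieltjes integrals against signed Borel measures on $[0, T]$; no joint measurability of the density $h$ in $(\omega, t)$ is needed. The only mildly delicate point is the pathwise identity $\rd|V| = |H| \, \rd|A|$ for signed Stieltjes integrals, which is textbook, together with a small remark on the precise meaning of ``dominated by $\rd\langle S' \rangle$'' (absolute continuity vs. a bounded density): the argument is insensitive to this distinction. I do not foresee a serious obstacle.
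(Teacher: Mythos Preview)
Your argument is correct. Both you and the paper first establish that the set $\{H\neq 0\}$ is $\rd\langle S'\rangle$-null, and then combine this with the domination hypothesis to conclude $V=0$; the two proofs differ in how each of these steps is carried out.

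For the first step, you pass through the Doob--Meyer decomposition $S'=S'_0+M+A$, deduce that $\int H\,\rd M$ is a finite-variation local martingale hence zero, and read off $\int H^2\,\rd\langle M\rangle=0$. The paper instead uses an associativity trick: writing $\1_{\{H\neq 0\}}=\frac{\1_{\{H\neq 0\}}}{H}\cdot H$, the process $\int_0^\cdot \1_{\{H\neq 0\}}\,\rd S'_s$ is an integral against the finite-variation process $V$, hence has vanishing quadratic variation, which gives $\int_0^\cdot \1_{\{H\neq 0\}}\,\rd\langle S'\rangle_s=0$ directly without ever decomposing~$S'$. For the second step, your Lebesgue decomposition of $\rd|A|$ works but is more than needed: once you know $H=0$ $\rd\langle S'\rangle$-a.e.\ and $\rd|V|\ll\rd\langle S'\rangle$, you get $\int_0^T \1_{\{H\neq 0\}}\,\rd|V|_s=0$ immediately, while $\int_0^T \1_{\{H=0\}}\,\rd|V|_s=\int_0^T \1_{\{H=0\}}|H|\,\rd|A|_s=0$ trivially; summing gives $|V|_T=0$. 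This is exactly how the paper concludes (applied to $V$ rather than $|V|$), and it bypasses the pathwise decomposition of $|A|$ entirely. Your route is more hands-on and perhaps more transparent about where each hypothesis enters; the paper's is shorter.
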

		\begin{proof}
			As \(\int_0^\cdot H_s \vd S'_s\) is of finite variation, we get that a.s. 
			\[
			0 = \Big \langle \int_0^\cdot \frac{\1_{\{H_s \neq 0 \}}}{H_s} \vd \Big( \int_0^s H_r \vd S'_r \Big) \Big \rangle = \int_0^\cdot \1_{\{H_s \neq 0 \}} \vd \langle S' \rangle_s.
			\]
			Hence, by the domination assumption, a.s.
			\[
			0 = \int_0^\cdot \1_{\{ H_s \neq 0 \}} \vd \Big( \int_0^s H_r \vd S'_r \Big) = \int_0^\cdot H_s \vd S'_s,
			\]
			which completes the proof.
		\end{proof}
		
		This lemma explains that value processes of increasing (or finite variation) profits are not dominated by the quadratic variation measure of the discounted asset price process. As we encounter below, however, it is possible that such value processes are dominated by the quadratic variation measure \(\rd \langle U \rangle\) of the space-transformed natural scale diffusion \(\U = \s (\Y)\). 
		This appears to be an interesting feature of our general diffusion setting,
		and the existence of such increasing profits is related to failure of the representation property of~$\S$ (see Section~\ref{sec: RP}).

	\section{The Structure of Increasing Profits} \label{sec: main}
	In our previous paper \cite{ACU25} we established a deterministic characterization of NIP in terms of the scale function and the speed measure. 
	The proofs in \cite{ACU25} relied on the fundamental theorem of asset pricing for NIP, which states that NIP is equivalent to a weak structure condition; cf. \cite{Fontana15}. 
	In this paper we investigate the NIP condition from a quite different point of view. Namely, instead of studying the NIP condition directly, we focus on the structure of increasing profits. This path provides new economic insights and it also leads to a new proof for results from \cite{ACU25}.

	By Standing Assumption~\ref{SA: semi + boundary} and Lemma~\ref{lem: isf DC}, restricted to the open set \(\s (J^\circ)\), the inverse scale function \(\q = \s^{-1}\) is the difference of two convex functions. Consequently, the second derivative measure \(\q'' (\rd x)\) is well-defined on \(\s (J^\circ)\). 
	By Lebesgue's decomposition and the Radon--Nikodym theorem (\cite[Theorems~5.2.6, 5.3.5]{benedetto}), there exists a unique decomposition
	\[
	\q'' (\rd x)= \fqpp (x) \vd x + \q''_\si (\rd x)\quad
	\text{on }\mathcal B(\s(J^\circ)),
	\]
	where \(\q''_\si\) is a signed measure that is singular w.r.t. the Lebesgue measure \(\llambda\). For $\llambda$-a.a. $x\in\s(J^\circ)$,
	the second derivative $\q''(x)$ of $\q$ at the point $x$ exists in the usual sense,
	is finite, and $\q''(x)=\fqpp(x)$.
	Therefore, in what follows, we prefer to write $\q''(x)$ instead of $\fqpp(x)$.
	
	\begin{remark}\label{rem:251125a1}
		As $\q$ is a difference of two convex functions on $\s(J^\circ)$,
		its right and left derivatives $\q'_+$ and $\q'_-$ exist, are finite everywhere on $\s(J^\circ)$ and can differ only on an at most countable set.
		In what follows, we use the notation $\{\q'_+=0\}$
		as a shorthand for $\{x\in\s(J^\circ):\q'_+(x)=0\}$.
		Furthermore,
		we write $\{\q'=0\}$ for an arbitrary Borel subset of $\s(J^\circ)$
		that differs from the set $\{\q'_+=0\}$ on a Lebesgue-null set
		(i.e., $\llambda(\{\q'=0\} \, \triangle \,\{\q'_+=0\})=0$).
	\end{remark}
	
	Recalling \cite[Exercise~VII.3.18]{RevYor}, the process \(\U \triangleq \s (Y)\) is a general diffusion on natural scale (i.e., up to increasing affine transformations, the scale function is the identity) with speed measure \(\m^{\U} \triangleq \m \circ \s^{-1}\). 
	We denote the Lebesgue decomposition (w.r.t. the Lebesgue measure) of the
	restriction $\m^\U |_{\s(J^\circ)}$ to $(\s(J^\circ), \mathcal B(\s(J^\circ)))$ of the speed measure \(\m^{\U}\)
	by 
	\[
	\m^{\U} |_{\s (J^\circ)} (\rd x) = \fmu (x) \vd x + \m^{\U}_\si(\rd x)\quad
	\text{on } \mathcal{B}(\s(J^\circ)).
	\]
	Furthermore, we introduce the auxiliary
	signed measure $\nu$ on $(\s(J), \mathcal B (\s(J)))$ by the formula
	\begin{equation} \label{eq:150425a4}\begin{split} 
			\nu (\rd x) \triangleq - &\, \1_{\{\q'= 0\} \, \cap \, \s (J^\circ)} (x)  r \q ( x) \fmu (x) \vd x
			\\&+ \1_{\s (J^\circ)} (x) \big[ \tfrac{1}{2} \q''_\si (\rd x) - r \q ( x) \m^{U}_{\si} (\rd x)  \big]
			\\&- \1_{\{\alpha \, \in\, \mathcal{A}\}} r\alpha \, \delta_{\s (\alpha)} (\rd x)
			- \1_{\{\beta \, \in\, \mathcal{A}\}} r\beta \, \delta_{\s (\beta)} (\rd x)
			\\
			&+ \1_{\{\l \, \in \, \mathcal{R}\}} (\tfrac{1}{2}\q'_+ (\s (\l)) - r \l \m^{U} (\{\s(\l)\})) \, \delta_{\s (\l)} (\rd x) 
			\\
			&+ \1_{\{\r \, \in \, \mathcal{R}\}} (- \tfrac{1}{2} \q'_- (\s (\r)) - r \r \m^{U} (\{ \s(\r)\}))\, \delta_{\s (\r)} (\rd x),
	\end{split} \end{equation} 
	where  \(\mathcal{A}\) (resp., \(\mathcal{R}\)) denotes the set of absorbing (resp., reflecting) boundaries for the diffusion \(\Y\). Notice that \(\nu\) is locally finite on $(\s(J^\circ), \mathcal B (\s(J^\circ)))$.
	Every term of $\nu $ captures a specific effect which results in an increasing profit.
	More specifically, we will prove that NIP holds if and only if the measure $\nu $ vanishes $ (\nu \equiv 0)$. In Section~\ref{sec: examples} below, we will link each component of \(\nu\) to path properties of \(\Y\), and use examples to demonstrate how these lead to increasing profits.
	
	\smallskip 
	Our first result provides a description for the value process \(V^H\)
	associated to an increasing profit \(H \in L (\S)\).
	More generally, the result only requires a value process of finite variation. In this context, we recall that increasing profits have a natural relation to trading strategies whose value processes are of finite variation, see Remark~\ref{rem: FV profit}.
	We also define the hitting times 
	\[
	T_{x} (\U) \triangleq \inf \{t \in [0, T] \colon \U_t = x\} \wedge T, \quad \inf \emptyset \triangleq \infty,
	\]
	for \(x \in \s(J)\).
	
	Finally, for the general diffusion $\U$ on natural scale,
	we introduce the \emph{diffusion local time} as the random field
	$ (\widehat L_t^x (\U):(t,x)\in[0,T]\times\s(J))$, defined by the formula
	\begin{equation}\label{eq:251125a1}
		\widehat L_t^x (\U) =
		\begin{cases}
			L_t^x (\U)&\text{if }(t,x)\in[0,T]\times\s(J\setminus\beta),
			\\
			L_t^{x-} (\U)&\text{if }(t,x)\in[0,T]\times\{\s(\beta)\}
			\text{ (in case }
			\beta\in J
			\text{)},
		\end{cases}
	\end{equation}
	where $\beta\triangleq\sup J$ and $( L_t^x (\U) \colon (t,x)\in[0,T]\times\IR)$ is the
	semimartingale local time.
	In other words, in the diffusion local time for a general diffusion on natural scale we correct the value of the semimartingale local time only at the upper boundary of the state space
	(notice that we have $L^{\s(\beta)}(\U)=0$,
	once $\beta\in J$,
	for the semimartingale local time, as the latter is right-continuous in the space variable).
	
	\begin{proposition}\label{lem:150425a1}
		If the strategy \(H \in L (\S)\) is such that its value process $V^H$ is of finite variation, then a.s.
		\begin{equation}
			\label{eq_abs_continuity}
			V^H = \int_0^\cdot H_s e^{- rs} \int \Big( 
			\1_{\s (J \, \setminus \, \cA)} (x) \,
			\rd \widehat L^x_s (U)
			+ \1_{\s(\cA)} (x)  \1_{(T_{x} (\U), T]} (s) \vd s \Big) \, \nu (\rd x).
		\end{equation}
	\end{proposition}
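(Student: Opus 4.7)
To prove the identity, I would begin with a pathwise semimartingale decomposition of \(\S = e^{-r \cdot} \q(\U)\). Integration by parts gives
\[
\rd \S_t = e^{-rt} \, \rd \Y_t - r e^{-rt} \q(\U_t) \vd t,
\]
and applying the It\^o--Tanaka formula to the difference-of-convex function \(\q\) on \(\s(J^\circ)\), together with appropriate boundary adjustments at absorbing and reflecting endpoints (via the framework of~\cite{CinJacProSha80}), expresses \(\rd \Y_t\) as the sum of a local-martingale piece \(\q'_-(\U_t) \, \rd \U_t\) and a bounded-variation piece given by \(\tfrac{1}{2} \int \rd_t L^x_t(\U) \, \q''(\rd x)\) plus boundary terms.

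Second, I would exploit the finite-variation assumption on \(V^H\): since \(V^H\) equals the integral of \(H e^{-r \cdot}\) against both the local-martingale and the bounded-variation parts of \(\S\), and continuous local martingales of finite variation are constant, the integral against the local-martingale part must vanish. Computing its quadratic variation forces
\[
H_s \, \q'_-(\U_s) = 0 \quad \text{for } \rd \langle \U \rangle \otimes \rd \P \text{-a.e.\ } (s, \omega).
\]
Thus, up to \(\rd \langle \U \rangle\)-null corrections, \(H\) can only act where \(\U_s \in \{\q'_+ = 0\}\), precisely the set singled out by the indicator in the first line of~\eqref{eq:150425a4}.

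Third, I would rewrite the surviving bounded-variation piece using two occupation-time identities: the semimartingale formula \(\int_0^t f(\U_s) \, \rd \langle \U \rangle_s = \int f(x) L^x_t(\U) \vd x\) and its speed-measure analogue \(\int_0^t f(\U_s) \vd s = \int f(x) L^x_t(\U) \, \m^\U(\rd x)\), valid since \(\U\) is on natural scale with speed measure \(\m^\U\). Splitting \(\q''(\rd x) = \q''(x) \vd x + \q''_\si(\rd x)\) and \(\m^\U|_{\s(J^\circ)} = \fmu(x) \vd x + \m^\U_\si(\rd x)\), and combining the absolutely continuous pieces with the constraint \(H_s \q'_-(\U_s) = 0\), produces the \(-r\q(x)\fmu(x) \1_{\{\q'=0\}} \vd x\) contribution (first line of~\(\nu\)), while the singular pieces yield the second line. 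At a reflecting endpoint, It\^o--Tanaka contributes \(\tfrac12 \q'_+(\s(\l)) \, \rd L^{\s(\l)}\) (resp.\ \(-\tfrac12 \q'_-(\s(\r)) \, \rd \widehat L^{\s(\r)}\)), and the drift \(-r \q(\U) \vd s\) evaluated against the atom of \(\m^\U\) at that boundary produces the matching \(-r \l \, \m^\U(\{\s(\l)\})\) term, yielding the last two lines of~\(\nu\). At an absorbing endpoint, \(\Y\) is frozen after the hitting time \(T_{\s(\l)}(\U)\) (resp.\ \(T_{\s(\r)}(\U)\)), so the drift \(-r \Y_s \vd s\) contributes \(-r\l \, \1_{(T_{\s(\l)}(\U), T]}(s) \vd s\) (resp.\ the analogous \(\r\)-term), producing the remaining components.

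The main obstacle, I expect, is the careful bookkeeping: producing the correct It\^o--Tanaka decomposition of \(\q(\U)\) at each boundary type, justifying a Fubini-type interchange that converts the time-differential \(\rd_s \bigl[\int L^x_s(\U) \, \q''(\rd x)\bigr]\) into a \(\nu\)-integral against \(\rd \widehat L^x_s(\U)\), and correctly handling the right/left-continuity adjustment (\(\widehat L\) versus \(L\)) that is precisely what makes the formula valid at the upper boundary \(\s(\r)\). Once these technicalities are sorted, the six lines of~\eqref{eq:150425a4} fall out by direct matching with the decomposition.
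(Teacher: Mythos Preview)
Your approach is essentially the paper's own: derive the Doob--Meyer decomposition of \(\S\) via It\^o--Tanaka applied to \(\q(\U)\) (the paper quotes this as Lemma~\ref{lem: ACU} from~\cite{ACU25} rather than rederiving it), kill the martingale integral by the finite-variation hypothesis, and convert the surviving drift into a \(\nu\)-integral using the diffusion and semimartingale occupation-time formulas.

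There is one genuine step you gloss over. When you write that ``combining the absolutely continuous pieces with the constraint \(H_s \q'_-(\U_s)=0\) produces the \(-r\q(x)\fmu(x)\1_{\{\q'=0\}}\,\rd x\) contribution,'' you are silently discarding the term \(\tfrac12\q''(x)\1_{\{\q'=0\}}\,\rd x\). The constraint only tells you that the absolutely continuous contribution is supported on \(\{\q'=0\}\); it does \emph{not} by itself kill \(\q''\) there. For that you need the separate measure-theoretic fact that \(\q''\1_{\{\q'=0\}}=0\) \(\llambda\)-a.e.\ (essentially: almost every point of \(\{\q'=0\}\) is a Lebesgue density point, and where \(\q''\) exists at such a point it must vanish). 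The paper invokes this explicitly as Lemma~\ref{lem:270125a1}. Without it, the absolutely continuous part of your decomposition does not match the first line of~\eqref{eq:150425a4}, and the identity fails.
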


	Next, we ask about the precise structure of increasing profits and, as a byproduct, a characterization of the NIP condition. This question is answered in the following theorem, which we consider as our main result. 
	We need some additional notation for its formulation.
	Let 
	\[
	\nu = \nu_+ - \nu_- 
	\] 
	be the Jordan decomposition (\cite[Theorem~5.1.8]{benedetto}) of \(\nu\) on \(\mathcal{B} (\s (J))\) and, as always, we denote the total variation measure by \(| \nu | \triangleq \nu_+ + \nu_-\). Further, let \(\s(J) = N_+ \sqcup N_-\) be a Hahn decomposition (\cite[Theorem~5.1.9]{benedetto}) for \((\s (J), \mathcal{B} (\s(J)), \nu)\), i.e., for all \( A \in \mathcal{B} (\s (J)) \), 
	\begin{align} \label{eq: Hahn decomposition}
		\nu (A \cap N_+) = \nu_+ (A), \quad \nu ( A \cap N_-) = - \nu_- (A).
	\end{align}
	Let \(\nu |_{\s (J^\circ)} = \nu_{\textup{ac}} + \nu_{\textup{si}}\) be the Lebesgue decomposition (\cite[Theorem~5.2.6]{benedetto}) of the locally finite signed measure \(\nu|_{\s (J^\circ)}\) w.r.t. the Lebesgue measure on \((\s (J^\circ), \mathcal{B} (\s (J^\circ)))\), and let \(N_{\on{si}} \in \mathcal{B} (\s (J^\circ))\) be a Lebesgue-null set such that \(\nu (A \cap N_{\on{si}}) = \nu_\textup{si} (A)\) for all \(A \in \mathcal{B}( \s (J^\circ) )\), which exists by the definition of the singular part (\cite[Definition~5.2.1]{benedetto}).
	We set
	$$
	N_{\q' = 0} \triangleq \s(J\setminus J^\circ) \cup N_{\on{si}}\cup \{ \q'_+ = 0 \} \in \mathcal{B} (\s (J)), 
	$$
	and notice that \(\nu\) is concentrated on \(N_{\q' = 0}\), i.e., \(\nu (A \cap N_{\q' = 0}) = \nu (A)\) for all \(A \in \mathcal{B} (\s (J))\). 
	Next, we define the strategy  
	\begin{align} \label{eq: theta}
		\theta_t \triangleq \1_{N_+ \, \cap \, N_{\q' = 0}} (U_t) - \1_{N_- \, \cap \, N_{\q' = 0}} (U_t), \quad t \in [0, T],
	\end{align}
	and the kernel 
	\[
	\mu (\rd t, \omega) \triangleq \int_{\s (J \, \setminus \, \cA)}
	\vd \widehat L^x_t (U)
	(\omega) \, |\nu| (\rd x)
	+ \sum_{x \in \s(\cA)} |\nu (x)| \1_{(T_{x} ( \U(\omega)), T]} (t) \vd t,
	\]
	where we again use the diffusion local time from~\eqref{eq:251125a1}.
	We write \(\mu \otimes \P\) for the measure \(\mu (\rd t, \omega) \, \P (\rd \omega)\) and \(\rd \langle \U \rangle \otimes \P\) for the measure \(\rd \langle \U \rangle_t (\omega) \, \P (\rd \omega)\), both defined on the measurable space \(([0, T] \times \Omega, \mathcal{B} ([0, T]) \otimes \cF)\).
	Finally, write 
	\[
	T_{\s (\cA)} (\U) \triangleq \inf \{t \in [0, T] \colon \U_t \in \s ( \cA ) \} \wedge T,
	\]
	for the first time \(\U\) hits an absorbing boundary point. 
	
	We now present the main results of this paper. The first theorem highlights the structural importance of the strategy \(\theta\) and the signed measure \(\nu\), providing an equivalent characterization for the NIP condition.

	\begin{theorem}\label{thm_main}
		The following are equivalent:
		\begin{enuroman}
			\item There exists an increasing profit. 
			\item There exists a set \(G \in \mathcal{B} (\s (J))\) such that \(| \nu | (G) > 0\).
			\item The trading strategy \(\theta\) from \eqref{eq: theta} is an increasing profit. 
		\end{enuroman}
		In particular, NIP is equivalent to \(| \nu | \equiv 0\). 
	\end{theorem}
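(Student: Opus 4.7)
The plan is to close the cycle (iii) $\Rightarrow$ (i) $\Rightarrow$ (ii) $\Rightarrow$ (iii), from which the supplementary claim on NIP follows immediately. The implication (iii) $\Rightarrow$ (i) is trivial, so the work lies in the remaining two implications.

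For (i) $\Rightarrow$ (ii), I would argue contrapositively. If $|\nu|\equiv 0$ and $H\in\IP$, then $V^H$ is increasing, hence of finite variation, so Proposition~\ref{lem:150425a1} applies and represents $V^H$ as an integral against $\nu$. Since $\nu\equiv 0$, this forces $V^H\equiv 0$ a.s., contradicting $\P(V^H_T>0)>0$.

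The substantive argument is (ii) $\Rightarrow$ (iii). First I would verify $\theta\in L(\S)$ and that $V^\theta$ is of finite variation. Since $\theta$ is bounded and a Borel function of $\U$, the task reduces to showing that the continuous local martingale part of $\S$ contributes nothing to $\int \theta\vd\S$. Now $\vd\langle\S\rangle$ is controlled by $(\q'_+(\U))^2\vd\langle\U\rangle$, and the support of $\theta$ lies in $N_{\q'=0}=\s(J\setminus J^\circ)\cup N_{\on{si}}\cup\{\q'_+=0\}$. The occupation time formula shows that $\vd\langle\U\rangle$ does not charge $\{t:\U_t\in N_{\on{si}}\}$ (as $N_{\on{si}}$ is Lebesgue-null), while $\q'_+(\U)=0$ cancels the contribution on the remaining interior part, so $\int_0^T\theta_s^2\vd\langle\S\rangle_s=0$. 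Hence $\theta\in L(\S)$ and $V^\theta$ coincides with the drift integral, which is of finite variation.

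Proposition~\ref{lem:150425a1} now applies to $\theta$. Since $\vd\widehat L^x_s(\U)$ is supported on $\{s:\U_s=x\}$, evaluating $\theta_s$ there gives $\theta_s=\mathbbm{1}_{N_+\cap N_{\q'=0}}(x)-\mathbbm{1}_{N_-\cap N_{\q'=0}}(x)$, and combining with the Hahn decomposition and the concentration of $\nu$ on $N_{\q'=0}$ yields the pointwise identity $\theta_s\,\nu(\vd x)=|\nu|(\vd x)$. An analogous simplification holds for the absorbing-boundary term, producing
\[
V^\theta_t=\int_0^t e^{-rs}\,\mu(\vd s),
\]
which is manifestly increasing. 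To conclude $\P(V^\theta_T>0)>0$ from $|\nu|(G)>0$, I would split $G$ into its parts in $\s(J\setminus\cA)$ and in $\s(\cA)$ and apply Fubini to reduce to (a) strict positivity of $x\mapsto\E[\widehat L^x_T(\U)]$ on the $|\nu|$-support in $\s(J\setminus\cA)$, and (b) $\P(T_x(\U)<T)>0$ for every $|\nu|$-charged absorbing $x$. Both consequences flow from regularity of the one-dimensional diffusion $\U$, which guarantees that every point of $\s(J)$ is visited in $(0,T]$ with positive probability. The main obstacle is the interior case when the $|\nu|$-support is $\llambda$-singular: local-time expectation must be shown to be strictly positive at \emph{every} such point, which requires the classical hitting-time and local-time theory for regular one-dimensional diffusions rather than a mere Lebesgue-a.e.\ argument.
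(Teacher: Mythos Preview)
Your overall architecture matches the paper's exactly: close the cycle (iii) $\Rightarrow$ (i) $\Rightarrow$ (ii) $\Rightarrow$ (iii), with (i) $\Rightarrow$ (ii) handled by contraposition through Proposition~\ref{lem:150425a1}. For (ii) $\Rightarrow$ (iii), your computation that $\int_0^T\theta_s^2\,\rd\langle\S\rangle_s=0$ and the subsequent identification $V^\theta_t=\int_0^t e^{-rs}\,\mu(\rd s)$ are precisely what the paper does (the latter appearing in the proof of Theorem~\ref{theo: characterization}); the paper merely packages this as an appeal to Proposition~\ref{prop: suff}.

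The one genuine difference is the final positivity step. The paper does \emph{not} use Fubini. Instead, inside the proof of Proposition~\ref{prop: suff}, it localizes to a compact interval $[\s(\l),R]$ with $|\nu|(G\cap[\s(\l),R])>0$, uses the Brownian time change $\U_t=B_{\gamma_t}$, and invokes the local-time support characterization (Lemma~\ref{lem: kallenberg}) for $B$ to produce an event of positive probability on which $L^x_T(\U)>0$ \emph{simultaneously for all} $x\in[\s(\l),R]\setminus\s(\cA)$. On that single event the integral $\int L^x_T(\U)\,|\nu|(\rd x)$ is strictly positive. Your Fubini route---reducing to $\E[\widehat L^x_T(\U)]>0$ for each non-absorbing $x$---is also valid and conceptually lighter, and you correctly flag that the delicate point is pointwise (not Lebesgue-a.e.) positivity. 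Note, however, that Lemma~\ref{lem: kallenberg} cannot be applied directly to $\U$ when there is a reflecting boundary (then $\U$ is not a local martingale), which is exactly why the paper passes to the time-changed Brownian motion; your argument will need the same device, or an independent diffusion-theoretic fact, to cover reflecting endpoints cleanly. The payoff of the paper's pathwise argument is that it yields the stronger Proposition~\ref{prop: suff}, which is reused elsewhere; the payoff of yours is a shorter proof of the theorem itself.
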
 
	
	It is worth noting that \(| \nu | \equiv 0\) is equivalent to \(\nu \equiv 0\).
	Indeed, if $|\nu|=\nu_++\nu_-\equiv0$, then $\nu_+\equiv0$ and $\nu_-\equiv0$, hence $\nu=\nu_+-\nu_-\equiv0$.
	Conversely, if $\nu\equiv0$, then, by~\eqref{eq: Hahn decomposition}, $\nu_+=\nu( \, \cdot \,\cap N_+)\equiv0$ and $\nu_-=-\nu(\, \cdot\,\cap N_-)\equiv0$, hence $|\nu|=\nu_++\nu_-\equiv0$.
	
	The next theorem provides a precise characterization for the set \(\IP\) of increasing profits. At this point, we recall that the corresponding value processes are described in Proposition~\ref{lem:150425a1} above.  
	
	\begin{theorem} \label{theo: characterization}
		A strategy \(H \in L (\S)\) is an increasing profit if and only if it satisfies the following three properties:
		\begin{enuroman}
			\item \(\rd \langle \U \rangle \otimes \P\)-a.e. \(H \q'_+ (\U) = 0\) on \([0, T_{\s (\cA)} (\U))\).
			\item \(\mu \otimes \P\)-a.e. \(\theta H \geq 0\). 
			\item \(\P (  \mu (\{t \in [0, T] \colon \theta_t H_t > 0 \}, \cdot \,) > 0 ) > 0\).
		\end{enuroman}
	\end{theorem}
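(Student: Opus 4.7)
The plan is to combine the explicit representation from Proposition~\ref{lem:150425a1} with a support analysis of the Jordan decomposition of the driving signed measure $\nu$, reducing the increasing-profit requirement to the three pathwise criteria (i)–(iii).

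\textbf{Step 1: Finite variation is equivalent to condition (i).} First I would apply the It\^o–Tanaka formula to $\Y = \q(\U)$, using that $\q$ is the difference of two convex functions on $\s(J^\circ)$. Writing $\rd \S_s = -re^{-rs}\Y_s\,\rd s + e^{-rs}\,\rd \Y_s$ and plugging in, one obtains a semimartingale decomposition of $V^H$ whose continuous local martingale part equals $\int_0^\cdot H_s e^{-rs} \q'_+(\U_s)\,\rd M^\U_s$, where $M^\U$ is the martingale part of the semimartingale $\U$. Since a continuous local martingale of finite variation starting at $0$ vanishes identically, $V^H$ is of finite variation if and only if the quadratic variation of its martingale part is zero, i.e., $\int_0^\cdot H_s^2 (\q'_+)^2(\U_s)\,\rd\langle \U\rangle_s \equiv 0$. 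This is precisely condition (i); the restriction to $[0, T_{\s(\cA)}(\U))$ is automatic because after absorption $\U$ is constant and $\rd\langle \U\rangle \equiv 0$. In particular, since an increasing profit has an increasing value process, condition (i) is necessary and enables the use of Proposition~\ref{lem:150425a1}.

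\textbf{Step 2: Key identity $\rd A_s = \theta_s\,\rd\mu_s$.} Under condition (i), Proposition~\ref{lem:150425a1} gives $V^H = \int_0^\cdot H_s e^{-rs}\,\rd A_s$ for the process $A_t = \int\big[\1_{\s(J\setminus\cA)}(x)\,\widehat L^x_t(\U) + \1_{\s(\cA)}(x)(t-T_x(\U))^+\big]\nu(\rd x)$. Splitting $A = A^+ - A^-$ by integrating instead against $\nu_+$ and $\nu_-$, each $A^\pm$ is non-decreasing in $t$. The crucial pathwise observation is that $\widehat L^x(\U)$ increases only on $\{s: \U_s = x\}$ and that, for absorbing $x$, $\U_s = x$ for all $s \geq T_x(\U)$; combining these with the concentration of $|\nu|$ on $N_{\q' = 0}$, one obtains that $\rd A^+$ is supported in $\{s: \U_s \in N_+ \cap N_{\q' = 0}\}$ while $\rd A^-$ is supported in $\{s: \U_s \in N_- \cap N_{\q' = 0}\}$. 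Since these sets are disjoint, $A^+$ and $A^-$ constitute the Jordan decomposition of $A$ in time, so $\rd |A| = \rd A^+ + \rd A^- = \rd\mu$. By the very definition of $\theta$, one has $\theta_s = +1$ on the support of $\rd A^+$ and $\theta_s = -1$ on that of $\rd A^-$, so $\theta_s\,\rd A_s = \rd A^+_s + \rd A^-_s = \rd \mu_s$, and multiplying by $\theta_s$ (which is $\pm 1$ on the support of $\rd\mu$ and yields $0 = \theta_s \rd \mu_s$ outside of it) gives $\rd A_s = \theta_s\,\rd\mu_s$.

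\textbf{Step 3: Reading off (ii) and (iii); main obstacle.} Combining Steps~1 and~2 yields $\rd V^H_s = H_s e^{-rs}\theta_s\,\rd\mu_s$. Since $e^{-rs}>0$, the requirement that $V^H$ be $\P$-a.s. increasing translates into $H\theta \geq 0$ $\mu\otimes\P$-a.e. (condition (ii)), and $\P(V^H_T > 0) > 0$ translates into $\P(\mu(\{t: H_t\theta_t > 0\}, \cdot)>0)>0$ (condition (iii)). The converse direction is then immediate: condition (i) ensures $V^H$ is of finite variation so that Proposition~\ref{lem:150425a1} applies, whereupon (ii) and (iii) deliver monotonicity and positivity. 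The main obstacle is the support analysis of Step~2: one must carefully track the various boundary Dirac terms in $\nu$, the asymmetric definition of $\widehat L$ at the upper boundary $\s(\beta)$, and the fact that $\theta$ uses $N_{\q' = 0}$, which is a priori strictly larger than the support of $|\nu|$. Ensuring that every instant at which $\rd\mu$ charges corresponds to $\U$ lying in $N_{\q' = 0}$, and verifying this uniformly across the interior contributions, the singular and boundary Dirac components, is where the bulk of the bookkeeping lies.
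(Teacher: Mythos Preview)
Your proposal is correct and follows essentially the same route as the paper: show that (i) is equivalent to the vanishing of the local martingale part of $V^H$ (hence to finite variation), invoke Proposition~\ref{lem:150425a1}, and then rewrite the resulting drift integral as $\int_0^\cdot H_s e^{-rs}\theta_s\,\mu(\rd s,\cdot)$ via the Hahn decomposition and the support properties of local times and absorption times, from which (ii) and (iii) drop out. The only cosmetic difference is that the paper packages your Step~2 into a single computation using the identity $\theta^2=\1_{N_{\q'=0}}(\U)$ together with $\nu(\rd x)=(\1_{N_+}(x)-\1_{N_-}(x))\,|\nu|(\rd x)$, whereas you phrase it as the Jordan decomposition of $A$ in the time variable; the content is the same, and the paper likewise restricts to a half-line state space to avoid the $\widehat L$ bookkeeping at $\s(\beta)$ that you flag.
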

	Providing some intuition, (i) deactivates the martingale part of the value process \(V^H\), (ii) entails that it has increasing paths, and (ii) and (iii) together ensure that \(V^H\) has a positive terminal value with positive probability.
	We remark that $\q'_+$ can be replaced with $\q'_-$ in~(i). This follows from the semimartingale occupation times formula together with the fact that $\q'_+$ and $\q'_-$ differ on an at most Lebesgue-null set (in fact, even on an at most countable set).
	
	In concrete model situations, (i) and~(ii) from Theorem~\ref{theo: characterization} are quite easy to understand, while
	condition~(iii) appears to require a case by case study.
	Our final main result replaces this part with a pathwise condition, providing a sufficient condition for a strategy to be an increasing profit.

\newpage 	
	\begin{proposition} \label{prop: suff}
		Let \(H \in L (\S)\) satisfy the following:
		\begin{enuroman}
			\item \(\rd \langle \U \rangle \otimes \P\)-a.e. \(H \q'_+ (\U) = 0\) on \([0, T_{\s (\cA)} (\U))\).
			\item \(\mu \otimes \P\)-a.e. \(\theta H \geq 0\). 
			\item There exists a set \(G \in \mathcal{B} (\s (J))\) such that \(|\nu| (G) > 0\) and a.s. it holds:
			for all \(t \in [0, T]\) with \(\U_t \in G\), we have \(\theta_t H_t > 0\).
		\end{enuroman}
		Then, \(H\) is an increasing profit.
	\end{proposition}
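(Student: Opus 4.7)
The plan is to verify the three conditions of Theorem~\ref{theo: characterization} for the candidate strategy \(H\). Conditions~(i) and~(ii) of Proposition~\ref{prop: suff} coincide verbatim with conditions~(i) and~(ii) of that theorem, so the whole work will consist in deducing condition~(iii) of Theorem~\ref{theo: characterization}, namely
\[
\P\big( \mu(\{ t \in [0,T] : \theta_t H_t > 0 \}, \cdot \,) > 0 \big) > 0,
\]
from the pathwise hypothesis~(iii) of Proposition~\ref{prop: suff}.

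The first step is to exploit the pathwise implication in~(iii): \(\P\)-a.s.\ the inclusion \(\{t \in [0,T] : U_t \in G\} \subseteq \{t \in [0,T] : \theta_t H_t > 0\}\) holds, and, since the kernel \(\mu\) is nonnegative, this yields the comparison \(\mu(\{\theta H > 0\}) \geq \mu(\{U \in G\})\) \(\P\)-a.s. Thus it suffices to show \(\P(\mu(\{t : U_t \in G\}) > 0) > 0\). Using that the diffusion local time \(\rd \widehat L^x(U)\) is carried by \(\{s : U_s = x\}\) and that \(U\) remains equal to \(x\) after hitting an absorbing point \(x \in \s(\cA)\), a short direct computation will give
\[
\mu(\{t : U_t \in G\}, \omega) = \int_{G \, \cap \, \s(J \setminus \cA)} \widehat L^x_T(U)(\omega) \, |\nu|(\rd x) + \sum_{x \, \in \, G \, \cap \, \s(\cA)} |\nu|(\{x\}) \, (T - T_x(U)(\omega))^+.
\]

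The main obstacle is the last step: showing that the right-hand side above is strictly positive on a set of positive \(\P\)-probability whenever \(|\nu|(G) > 0\). This is a purely probabilistic property of \(U\) under the standing assumption on the starting point \(x_0\), and I plan to recycle the local-time and hitting-time machinery underlying the implication~(ii)~\(\Rightarrow\)~(iii) of Theorem~\ref{thm_main}. As \(|\nu|(G \, \cap \, \cdot \,)\) is not in general the canonical measure of any scale/speed pair, the cleanest route will be to isolate the needed content as a separate lemma of the form: ``for any nonnegative locally finite measure \(\sigma\) on \(\s(J)\) with \(\sigma(\s(J)) > 0\) the associated random integral is positive with positive probability'', proved by splitting the contributions from \(\s(J^\circ)\) and from absorbing atoms and invoking the reachability of \(\on{supp} \sigma\) from \(x_0\), and then to specialise it to \(\sigma = |\nu|(G \, \cap \, \cdot \,)\).
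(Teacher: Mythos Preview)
Your plan is correct and matches the paper's proof almost exactly: both reduce to Theorem~\ref{theo: characterization}, and both establish \(\P(\mu(\{t : U_t \in G\}) > 0) > 0\) by truncating \(G\) to a compact interval \([\s(\l), R]\), handling the absorbing atom via Lemma~\ref{lem: BR}, and showing via the Brownian time-change and Lemma~\ref{lem: kallenberg} that on \(\{T_{\s(\l)}(U) < T,\, T_R(U) < T\}\) the local times \(L^x_T(U)\) are simultaneously positive for all \(x\) in the relevant range. One caution on phrasing: in the paper the implication (ii)~\(\Rightarrow\)~(iii) of Theorem~\ref{thm_main} is itself proved \emph{via} Proposition~\ref{prop: suff}, so ``recycling the machinery underlying'' that implication is circular as written; what you actually need are Lemmas~\ref{lem: BR} and~\ref{lem: kallenberg} and the time-change representation, all of which are independent of both results.
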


	\begin{discussion}
		(a) In our previous result \cite[Theorem~3.1]{ACU25}, we proved that NIP is equivalent to the following three conditions: 
		\begin{enumerate}
			\item[\textup{(i)}]
			Every accessible boundary point \(b \in J \setminus J^\circ\) satisfies one of the following two conditions: 
			\begin{enumerate}
				\item[\textup{(i.a)}] \(b\) is absorbing and either \(r = 0\) or \(b = 0\);
				\item[\textup{(i.b)}] \(b\) is reflecting and 
				\[
				r b\, \m^\U (\{ \s (b) \}) = \begin{cases} \frac12 \q'_+ (\s (\l)), & b = \l, \\[1mm] - \frac12 \q'_- (\s (\r)), & b = \r. \end{cases}
				\]
			\end{enumerate}
			\item[\textup{(ii)}]
			\(r \q (x) \m^{\U}_\si (\rd x) = \frac{1}{2} \q''_\si (\rd x)\) on \(\mathcal{B}(\s (J^\circ) )\).
			\item[\textup{(iii)}]
			\(r \fmu (x) = 0\) for $\llambda$-a.a. \(x \in \{z \in \s (J^\circ) \colon \q' (z) = 0\}\).
		\end{enumerate}
		Taking \eqref{eq:150425a4} into account, (i) holds if and only if the last three terms in \eqref{eq:150425a4} are zero, (ii) holds if and only if the second term is zero, and finally, (iii) holds if and only if the first term vanishes.
		Notice that the first term in~\eqref{eq:150425a4} contains the additional factor $\q (x)$, but it vanishes in at most one point, as $\q$ is strictly increasing, and is, therefore, excluded from~(iii).
		As a consequence, since \(| \nu | \equiv 0\) if and only if \(\nu \equiv 0\), the equivalence of (i) and (ii) from Theorem~\ref{thm_main} recovers \cite[Theorem~3.1]{ACU25}. 
		
		\smallskip
		(b) Theorem \ref{thm_main} reveals the fundamental importance of the trading strategy \(\theta\). First, there exists an increasing profit if and only if \(\theta \in \IP\) and second, any increasing profit is only made on the support of~\(\theta\), i.e., on the set \(\{ t \in [0, T] \colon \theta_t \neq 0 \}\). More specifically, if \(H\) is an increasing profit, we must have \(\P \otimes \mu\)-a.e. \(\{ H > 0 \} \subset \{ \theta \geq 0 \}\) and \(\{ H < 0 \} \subset \{ \theta \leq 0 \}\). This provides a rather precise understanding of how an increasing profit can be achieved.  Proposition~\ref{prop: suff} provides a useful recipe to design increasing profits using \(\theta\) and non-negligible sets under the measure~\(| \nu |\). 
		Recalling Remark~\ref{rem: unbounded}, scaling the strategy \(\theta\) allows to gain an unbounded increasing profit. 
		
		\smallskip
		(c) The structure of \(\theta\) and \(\nu\) connects the existence of increasing profits to certain path properties of our general diffusion \(\Y\) and further explains how they can be converted into increasing profits. We discuss these interpretations
		in Section~\ref{sec: examples} by considering a variety of examples.  
		
		\smallskip 
		(d) The Hahn decomposition \(N_+ \sqcup N_-\) is not unique (it is only unique up to null sets; see the first remark on p. 224 in \cite{benedetto} for details). As a consequence, the trading strategy \(\theta\) depends on this decomposition. Nevertheless, by virtue of Proposition~\ref{lem:150425a1}, the value function \(V^\theta\) is independent of the choice of the Hahn decomposition in the sense that all ``versions'' of \(\theta\) lead to the same value process.
		
		Another natural question is whether the choice of the set \(N_{\q' = 0}\) is unique. In general, we only require the following two properties: first, \(N_{\q' = 0} \cap \{ \q' \neq 0 \}\) must be a Lebesgue-null set and second, the signed measure \(\nu\) must be concentrated on \(N_{\q' = 0}\). The purpose of the 
		first property is to guarantee that the value process \(V^\theta\) is of finite variation, and the purpose of the second property is explained by the structure of \(V^\theta\) as in Proposition~\ref{lem:150425a1}, activating the possibility of a positive terminal value. Our choice of \(N_{\q' = 0}\) clearly has these two properties. Again, the value process is not affected by taking a different set with such properties.
		
		\smallskip 
		(e) Proposition~\ref{prop: suff} provides sufficient but not necessary conditions for an increasing profit.
		The point for including it is its simplicity in comparison with part (iii) from Theorem~\ref{theo: characterization}.
		To see that the conditions in Proposition~\ref{prop: suff} are not necessary, consider the case \(\s (J) = [ \s (\l), \infty)\) with \(\l \in \mathcal{R}\), and assume that \(G \in \mathcal{B} (\s(J))\) is such that \(|\nu| (G) > 0\). Then, for
		\(R\in(\s(x_0),\infty)\)
		large enough, we also have \(|\nu| (G \cap [\s (\l), R ]) > 0\) and, following arguments from the proof of Proposition~\ref{prop: suff}, one can show that 
		\[
		H_t \triangleq \theta_t \1_{G \, \cap \, [\s (\l), R]} (\U_t) \1_{\{ t \leq T_{R+1} (\U) \}}, \quad t \in [0, T], 
		\] 
		is an increasing profit. However, \(H_t = 0\) on \((T_{R+1} (\U), T]\), while, for any non-empty set \(A \in \mathcal{B} (\s (J))\), with positive probability, \((T_{R+1} (\U), T] \cap \{ t \in [0, T] \colon U_t \in A \} \neq \emptyset\). Again, for details we refer to the proof of Proposition~\ref{prop: suff} below. Of course, it is possible to sharpen the sufficient conditions in Proposition~\ref{prop: suff} to cover such examples.  However, it seems that a precise description is difficult to formulate without using the kernel \(\mu\) (cf. Theorem~\ref{theo: characterization}~(iii)).
		
		Last, let us stress that if the NIP condition fails, there exists always an increasing profit \(H \in \IP\) that satisfies the properties from Proposition~\ref{prop: suff}, namely the canonical strategy \(\theta\).
	\end{discussion}

	\section{Proofs of Theorems \ref{thm_main} and~\ref{theo: characterization}, and Propositions \ref{lem:150425a1} and~\ref{prop: suff}}\label{sec:proofs}
	This section is dedicated to the proofs of our main results. 
	We start with Proposition~\ref{lem:150425a1}, followed by Theorem~\ref{theo: characterization} and Proposition~\ref{prop: suff}, and finally, we turn to Theorem~\ref{thm_main}. Throughout this section, to ease our presentation, we only consider the situation \(\s (J) = [\s(\l), \infty)\).
	In this case, diffusion 
	and semimartingale local times coincide,
	and we therefore use only the latter in the formulas below.
	All other cases for $\s(J)$ can be treated similarly. 
	
	To provide fairly self-contained proofs, let us recall some results from the references~\cite{ACU25,BrugRuf16, Kal21, RW2}.
	\begin{lemma}[\textup{\cite[Lemma~3.2]{ACU25}}] \label{lem: ACU}
		Let \(\S = \S_0 + M + A\) be the Doob--Meyer decomposition of \(\S\), where \(M\) is the local martingale and \(A\) the finite variation part. 
		
		\smallskip
		\noindent
		\textup{(a)} 
		In case \(\l\) is absorbing for \(Y\), then 
		\begin{equation} \label{eq: final DMD}
			\begin{split}
				\rd \langle M\rangle_t &= e^{- 2rt} \big[\q'_+ (\U_t) \big]^2 \1_{\{t < T_{\s (\l)} (\U) \}} \vd \langle \U\rangle_t, \\ 
				\rd A_t &= e^{-rt}   \Big[ - r \q (\U_t) \vd t + \frac{1}{2} \int_{\s (J^\circ)} \rd L^x_t (\U) \, \q'' (\rd x) \Big], 
			\end{split}
		\end{equation}
		where the indicator \(\1_{\{t < T_{\s (\l)} (\U)\}}\) is included
		to emphasize that we do not require \(\q'_+ (\s(\l))\) to be well-defined (and, indeed, the limit of $\q'_+(u)$, as $u\searrow\s(\l)$, can fail to exist).
		
		\smallskip
		\noindent
		\textup{(b)}
		In case \(\l\) is reflecting for \(Y\), then 
		\begin{equation} \label{eq: final DMD - re}
			\begin{split}
				\rd \langle M\rangle_t &= e^{- 2rt} \big[\q'_+ (\U_t) \big]^2 \vd \langle \U\rangle_t, \\ 
				\rd A_t &= e^{-rt} \Big[ - r \q (\U_t) \vd t+ \frac{1}{2} \q'_+ (\s (\l)) \vd L^{\s (\l)}_t (\U) + \frac{1}{2} \int_{\s (J^\circ)} \rd L^x_t (\U) \, \q'' (\rd x) \Big].
			\end{split}
		\end{equation}
	\end{lemma}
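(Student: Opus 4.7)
The plan is to compute the Doob--Meyer decomposition of \(\S = e^{-r\cdot}\q(\U)\) directly, using the It\^o--Tanaka formula applied to \(\q(\U)\) combined with integration by parts for the deterministic factor \(e^{-rt}\). Since \(\s\) is continuous and strictly increasing, the semimartingale property of \(\Y\) from the Standing Assumption transfers to \(\U = \s(\Y)\), and by Lemma~\ref{lem: isf DC}, \(\q\) is the difference of two convex functions on \(\s(J^\circ)\), so It\^o--Tanaka is available there. The factor \(e^{-rt}\) is smooth and of finite variation, so integration by parts contributes only the drift \(-re^{-rt}\q(\U_t)\vd t\) to the FV part and leaves the martingale structure of \(\S\) determined by that of \(\q(\U)\).

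For case~(a), \(\U\) is stopped at \(T_{\s(\l)}(\U)\) and is a continuous local martingale on the stochastic interval \([0, T_{\s(\l)}(\U))\), this being the content of "natural scale plus absorption". I would apply It\^o--Tanaka on this interval, where \(\U\) takes values in \(\s(J^\circ)\) and \(\q'_+\) is finite, to obtain
\[
\q(\U_t) - \q(\U_0) = \int_0^t \q'_+(\U_s)\1_{\{s < T_{\s(\l)}(\U)\}} \vd \U_s + \tfrac{1}{2}\int_{\s(J^\circ)} L^x_t(\U) \, \q''(\vd x).
\]
Multiplying by \(e^{-rt}\) and separating, the local-martingale part is \(\vd M_t = e^{-rt}\q'_+(\U_t)\1_{\{t < T_{\s(\l)}(\U)\}}\vd \U_t\), and its quadratic variation gives the first line of~\eqref{eq: final DMD}; the indicator is essential because the limit of \(\q'_+(u)\) as \(u\searrow\s(\l)\) may not exist, but after absorption both \(\vd\U\) and \(\vd\langle\U\rangle\) vanish, so nothing is lost. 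The remaining terms collect to the FV expression in~\eqref{eq: final DMD}.

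For case~(b), Lemma~\ref{lem: isf DC} guarantees that \(\q''\) extends to a finite signed measure on any \([\s(\l),\s(z)]\) and in particular that \(\q'_+(\s(\l))\) is finite. I would extend \(\q\) from \([\s(\l),\infty)\) to all of \(\bR\) by a constant on \((-\infty,\s(\l))\); the extension is globally DC with second-derivative measure \(\q'_+(\s(\l))\,\delta_{\s(\l)}\) plus the restriction of \(\q''\) to \(\s(J^\circ)\). Applying It\^o--Tanaka to this extension at \(\U\) produces
\[
\q(\U_t) - \q(\U_0) = \int_0^t \q'_+(\U_s)\vd \U_s + \tfrac{1}{2}\q'_+(\s(\l))\, L^{\s(\l)}_t(\U) + \tfrac{1}{2}\int_{\s(J^\circ)} L^x_t(\U) \, \q''(\vd x).
\]
A reflecting natural-scale diffusion admits a Skorokhod-type decomposition in which the reflection push is of finite variation supported on \(\{\U=\s(\l)\}\); hence \(\vd\langle\U\rangle\) sees only the local-martingale part of \(\U\), yielding \(\vd\langle M\rangle_t = e^{-2rt}[\q'_+(\U_t)]^2 \vd\langle\U\rangle_t\) without an indicator. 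Collecting the remaining contributions and the drift \(-re^{-rt}\q(\U_t)\vd t\) produces~\eqref{eq: final DMD - re}.

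The main obstacle is the correct treatment of the boundary. In case~(a) the difficulty is that \(\q'_+\) may not be well-defined at \(\s(\l)\); this is handled by localizing to \([0, T_{\s(\l)}(\U))\) and exploiting that \(\S\) is constant afterwards, which is exactly what the indicator encodes. In case~(b) the delicate point is to identify the reflection mechanism of \(\U\) in the paper's local-time scaling convention and to verify that its contribution combines cleanly with the boundary mass of the extended \(\q''\); once this Skorokhod-type structure is in place, the It\^o--Tanaka computation is mechanical and uniqueness of the Doob--Meyer decomposition pins down \(M\) and \(A\) as claimed.
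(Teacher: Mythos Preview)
The paper does not supply a proof of this lemma; it is cited from \cite[Lemma~3.2]{ACU25}. Your overall strategy---It\^o--Tanaka for \(\q(\U)\) combined with integration by parts for \(e^{-rt}\)---is the natural one and is presumably what the cited proof does. However, your displayed formula in case~(b) contains a bookkeeping error that, carried through, yields the wrong boundary coefficient.

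With the constant extension of \(\q\) below \(\s(\l)\), the It\^o--Tanaka formula in the right-continuous local-time convention (fixed in the paper; cf.\ \cite[Theorem~VI.1.5]{RevYor}) takes the \emph{left} derivative in the \(\rd\U\)-integral, and your extension has \(\tilde\q'_-(\s(\l))=0\). Combined with the Skorokhod decomposition \(\rd\U_t=\rd M^\U_t+\tfrac12\,\rd L^{\s(\l)}_t(\U)\), the \(\rd\U\)-integral therefore contributes nothing at the boundary, and the term \(\tfrac12\q'_+(\s(\l))L^{\s(\l)}_t(\U)\) arises solely from the second-derivative atom. Your display instead uses \(\q'_+(\U_s)\) against \(\rd\U_s\), which double-counts: with \(\q(x)=x\) your formula reads \(\U_t-\U_0=\int_0^t\rd\U_s+\tfrac12 L^{\s(\l)}_t(\U)\), forcing \(L^{\s(\l)}(\U)\equiv0\). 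The fix is either to write \(\tilde\q'_-\) in the \(\rd\U\)-integral and then pass to \(\q'_+\) only in the \(\rd M^\U\)-part (legitimate since \(\q'_+=\q'_-\) \(\rd\langle\U\rangle\)-a.e.), or to use a linear rather than constant extension, so that no atom appears and the boundary term comes entirely from the Skorokhod push.

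Two minor points. In case~(a), \(\S\) is not constant after absorption unless \(r\l=0\); it is \(\Y\) (equivalently \(\U\)) that freezes, and the surviving drift \(-r\l e^{-rt}\,\rd t\) is already captured by \(-r\q(\U_t)\,\rd t\). Also, the limiting step in the local-time integral as you approach \(\s(\l)\) in case~(a) needs the integrability condition \(\int_{\s(\l)+}(x-\s(\l))\,|\q''|(\rd x)<\infty\) from Lemma~\ref{lem: isf DC}; you should invoke it explicitly.
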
 
	
	\begin{lemma}[\textup{\cite[Lemma~3.5]{ACU25}}]\label{lem:270125a1}
		Consider an open interval $I\subset\bR$ and a function $f\colon I\to\bR$ such that
		\begin{enuroman}
			\item
			$f'$ exists and is finite \(\llambda\)-a.e. on $I$,
			
			\item
			$f''$ exists and is finite $\llambda$-a.e. on $I$.
		\end{enuroman}
		Then, $f''\1_{\{f'=0\}}=0$ $\llambda$-a.e. on $I$.
	\end{lemma}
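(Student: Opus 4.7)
The plan is to combine the Lebesgue density theorem with the uniqueness of a limit. First, I would pass to a set on which everything is well-behaved by defining
\[
E \triangleq \{x \in I : f'(x) \text{ exists finitely with } f'(x) = 0, \text{ and } f''(x) \text{ exists finitely}\}.
\]
By hypotheses~(i) and~(ii), the symmetric difference between $E$ and $\{f' = 0\}$ is $\llambda$-null, so it suffices to prove that $f''(x) = 0$ at $\llambda$-almost every $x \in E$.

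Second, I would invoke the Lebesgue density theorem: $\llambda$-almost every $x \in E$ is a point of density one for $E$. For such an $x$, we have $\llambda(E \cap (x - r, x + r)) > 0$ for every sufficiently small $r > 0$, so $E \cap ((x - r, x + r) \setminus \{x\}) \neq \emptyset$, and one can extract a sequence $(x_n)_{n \in \mathbb{N}} \subset E \setminus \{x\}$ with $x_n \to x$. Since $x_n, x \in E$, we have $f'(x_n) = 0 = f'(x)$, so the difference quotient
\[
\frac{f'(x_n) - f'(x)}{x_n - x} = 0
\]
vanishes identically in $n$. The existence of $f''(x)$ as the limit $\lim_{y \to x} (f'(y) - f'(x))/(y - x)$ forces this limit to coincide with the limit along the particular sequence $(x_n)$, yielding $f''(x) = 0$.

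This argument presents no serious obstacle; it is essentially the standard ``if a function vanishes on a set, its derivative vanishes at density points of that set'' principle applied to $g = f'$. The only minor point requiring care is the passage from ``$x$ is a density point of $E$'' to ``$x$ is an accumulation point of $E \setminus \{x\}$ inside $I$'', which is immediate from the positivity of $\llambda(E \cap (x - r, x + r))$ for all small $r > 0$.
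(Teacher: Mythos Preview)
Your argument is correct. The paper does not give its own proof of this lemma; it is quoted verbatim from \cite[Lemma~3.5]{ACU25} and used as a black box, so there is nothing to compare against here beyond noting that your density-point argument is the standard route to this fact and goes through without issue.
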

	
	\begin{lemma}[\textup{\cite[Theorem~1.1]{BrugRuf16}}] \label{lem: BR}
		For every \(\varepsilon \in (0, T]\), \(x_0 \in \s
		(J^\circ\cup\mathcal R)\) and \(y_0 \in \s (J)\),
		\[
		\P (T_{y_0} (\U) < \varepsilon \mid \U_0 = x_0) > 0,
		\]
		where we recall that \(\mathcal{R}\) are the reflecting boundaries of \(\Y\) and \(T_{y_0} (\U) = \inf \{t \in [0, T] \colon \U_t = y_0 \}\wedge T\). 
	\end{lemma}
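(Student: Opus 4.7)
The plan is to establish the lemma by combining a Dambis--Dubins--Schwarz (DDS) time change with a positivity argument for the quadratic variation \(\langle U \rangle_\varepsilon\), leveraging the regularity of the natural-scale diffusion \(\U\). The trivial case \(y_0 = x_0\) aside (then \(T_{y_0}(\U) = 0 < \varepsilon\)), I assume \(y_0 > x_0\); the case \(y_0 < x_0\) is symmetric.

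First, since \(\U\) is on natural scale and \(x_0 \in \s(J^\circ \cup \mathcal{R})\), one can write \(\U_t = x_0 + M_t + K_t\) up to the first hit of any absorbing boundary, where \(M\) is a continuous local martingale and \(K\) is a nonnegative nondecreasing reflection term (zero unless an accessible reflecting boundary in \(J\) is visited). Thus \(\langle U \rangle = \langle M \rangle\), and DDS yields a Brownian motion \(B\) (on an enlargement of the filtration if needed) such that \(M_t = B_{\langle U \rangle_t}\). Since \(K \geq 0\) and paths are continuous, whenever \(B_{\langle U \rangle_t} \geq y_0 - x_0\) the process \(\U_t\) exceeds \(y_0\); in particular, on the event \(\{\tau^B_{y_0-x_0} < \langle U \rangle_\varepsilon\}\), where \(\tau^B_\ell \triangleq \inf\{s \geq 0 : B_s = \ell\}\), one has \(T_{y_0}(\U) < \varepsilon\).

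The crucial step is to show that this event has positive probability, which reduces to \(\P(\langle U \rangle_\varepsilon > 0) > 0\), since for any deterministic \(s > 0\) we have \(\P(\tau^B_{y_0-x_0} < s) > 0\) and \(\langle U \rangle\) is nondecreasing. The positivity of \(\langle U \rangle_\varepsilon\) in turn follows from the regularity of \(\U\) via the two occupation times identities \(\varepsilon = \int L^x_\varepsilon(U)\, \m^U(\rd x)\) and \(\langle U \rangle_\varepsilon = \int L^x_\varepsilon(U)\, \rd x\) (with our normalization of semimartingale local time): since \(\m^U\) is locally finite and strictly positive on every compact subinterval of \(\s(J^\circ)\), the local time field cannot vanish identically on \(\s(J^\circ)\) up to time \(\varepsilon\), so \(\langle U \rangle_\varepsilon > 0\) on a set of full probability. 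A neighborhood argument around \(x_0\), using that the speed measure near \(x_0\) is finite so \(\U\) cannot be held there forever, ensures the same positivity uniformly in small \(\varepsilon\).

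The main obstacle is twofold. First, \(B\) and \(\langle U \rangle\) are not independent, so the event \(\{\tau^B_{y_0-x_0} < \langle U \rangle_\varepsilon\}\) must be analyzed pathwise rather than via a product-measure estimate; one exploits the monotonicity of \(\langle U \rangle\) and the hitting-time structure of \(\tau^B_{y_0-x_0}\) jointly on the canonical space carrying both objects. Second, the case when \(y_0 \in \s(\cA)\) is an accessible absorbing boundary needs separate care, since DDS becomes uninformative after absorption. Here one invokes Feller's classification (via scale and speed) to obtain \(\P(T_{y_0}(\U) < \infty) > 0\), and then promotes this to any given \(\varepsilon\) by a strong Markov iteration at times \(k\varepsilon/n\): if \(\P(T_{y_0}(\U) < \varepsilon) = 0\), iterating would force \(\P(T_{y_0}(\U) = \infty) = 1\) from starting points in a positive-probability set of reachable states, contradicting accessibility.
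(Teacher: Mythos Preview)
The paper does not prove this lemma at all; it is quoted as \cite[Theorem~1.1]{BrugRuf16} and used as a black box. So there is no in-paper argument to compare against, and the question is whether your sketch stands on its own.

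It does not, and the gap is precisely the one you label ``the main obstacle'' without then closing. You need $\P(\tau^B_{y_0-x_0} < \langle U\rangle_\varepsilon) > 0$, where the DDS Brownian motion $B$ and the random clock $\langle U\rangle_\varepsilon$ are dependent. Your reduction ``this reduces to $\P(\langle U\rangle_\varepsilon > 0) > 0$, since for any deterministic $s>0$ we have $\P(\tau^B_\ell < s) > 0$'' is simply invalid without independence, and the later sentence ``one exploits the monotonicity of $\langle U\rangle$ and the hitting-time structure \ldots\ jointly on the canonical space'' is a restatement of the problem, not a resolution. Nothing you have written excludes a mechanism in which large upward excursions of $B$ systematically coincide with small values of $\langle U\rangle_\varepsilon$. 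The strong-Markov iteration for the absorbing case has the same defect: from $\P_{x_0}(T_{y_0}<\varepsilon)=0$ you cannot iterate unless you already know the same vanishing for every starting point in the support of $U_\varepsilon$, which you have not shown.

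A route that actually tames the dependence is the It\^o--McKean/Kallenberg representation the paper itself uses in Step~2 of the proof of Proposition~\ref{prop: suff}: write $U_t = B_{\gamma_t}$ with $\gamma_t = \inf\{s: A_s > t\}$ and $A_s = \int L^x_s(B)\,\m^U(\rd x)$, where $B$ is a genuine Brownian motion started at $x_0$ and $A$ is an explicit additive functional of $B$ alone. Then $T_{y_0}(U) < \varepsilon$ becomes $A_{\tau^B_{y_0}} < \varepsilon$, a statement about a single functional of a single Brownian path, and one shows $0$ lies in the support of $A_{\tau^B_{y_0}}$ by exhibiting Brownian paths that travel from $x_0$ to $y_0$ while staying in a compact interval on which $\m^U$ is finite and accumulating arbitrarily small local time there. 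This is the substance that your sketch is missing.
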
 
	
	\begin{lemma}[\textup{\cite[Corollary~29.18]{Kal21}}] \label{lem: kallenberg}
		Let \(M = (M_t)_{t \geq 0}\) be a continuous local martingale with local time process \((L_t^x (M) \colon x \in \bR, \, t \geq 0 )\). Then, a.s., it holds simultaneously for all $x\in\bR$ and $t\in\bR_+$ that
		\[
		\{ L_t^x (M) > 0 \} = \Big\{ \inf_{s \in [0, t]} M_s < x < \sup_{s \in [0, t]} M_s \Big\}.
		\]
	\end{lemma}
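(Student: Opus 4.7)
The plan is to reduce the statement to a Brownian motion via the Dambis--Dubins--Schwarz (DDS) time change, then to establish both set inclusions pointwise in $(x,t)$ by Tanaka's formula and the strong Markov property, and finally to upgrade the pointwise equality to the simultaneous statement using joint continuity of the local time.

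By DDS, on a suitable extension of the probability space there is a Brownian motion $B$ with $M_t = M_0 + B_{\langle M \rangle_t}$, and the standard time-change formulas give $L^x_t(M) = L^{x - M_0}_{\langle M \rangle_t}(B)$ together with $\inf_{s \leq t} M_s = M_0 + \inf_{s \leq \langle M \rangle_t} B_s$ (and similarly for the supremum). Hence it suffices to prove the claim for a Brownian motion~$B$. For the inclusion $\subseteq$, fix $(x,t)$ and assume $x \geq \sup_{s \leq t} B_s$; then Tanaka's formula
\[
(B_t - x)^+ = (B_0 - x)^+ + \int_0^t \1_{\{B_s > x\}} \vd B_s + \tfrac{1}{2} L^x_t(B)
\]
forces $L^x_t(B) = 0$, and the case $x \leq \inf_{s \leq t} B_s$ is analogous, using that left- and right-continuous semimartingale local times coincide for continuous local martingales. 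For the reverse inclusion, assume $\inf_{s \leq t} B_s < x < \sup_{s \leq t} B_s$; by continuity the hitting time $T_x \triangleq \inf\{s \colon B_s = x\}$ satisfies $T_x < t$, so by the strong Markov property $W_s \triangleq B_{T_x + s} - x$ is a Brownian motion started at zero on $[0, t - T_x]$, and additivity of local time gives $L^x_t(B) \geq L^0_{t - T_x}(W)$. L\'evy's identity $L^0_u(W) \stackrel{d}{=} \sup_{s \leq u} W_s$, combined with the fact that a Brownian motion started at zero has strictly positive supremum a.s.\ on any $(0, u]$, yields $L^0_{t - T_x}(W) > 0$ a.s.\ on $\{T_x < t\}$.

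To upgrade this pointwise identity to a simultaneous one, I would invoke Trotter's theorem, which provides a jointly continuous version of $(t, x) \mapsto L^x_t(B)$; this makes $\{L^x_t > 0\}$ an open subset of $\bR_+ \times \bR$. The right-hand side is equally open, and both sets are non-decreasing in $t$. Applying the pointwise result on a countable dense subset $\mathcal{D} \subset \bR_+ \times \bR$ and collecting one null set, one obtains the equality of the two open sets on $\mathcal{D}$; density and openness propagate this to the claimed simultaneous equality almost surely.

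The main obstacle I anticipate is precisely this final step: collapsing uncountably many pointwise null sets into one, which hinges on joint continuity of the local time and openness of both sides. A secondary technical point in the $\subseteq$ direction is verifying that for continuous local martingales the semimartingale local time has no jumps in the space variable, which rests on the identity $L^x - L^{x-} = 2 \int_0^\cdot \1_{\{M_s = x\}} \vd A_s$ together with $A \equiv 0$ when $M$ is a local martingale.
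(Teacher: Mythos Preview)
The paper does not prove this lemma at all; it is simply cited from Kallenberg's textbook, so there is no ``paper's proof'' to compare against. Your attempt is therefore a standalone proof, and most of it is sound: the DDS reduction is correct, the Tanaka argument for the inclusion $\{L^x_t>0\}\subseteq\{\inf<x<\sup\}$ is correct, and the strong-Markov-plus-L\'evy argument gives the reverse inclusion for each \emph{fixed} pair $(x,t)$. The $\subseteq$ direction even upgrades simultaneously by your density idea, since for rational $q$ outside the range one has $L^q_t=0$, and continuity of $(x,t)\mapsto L^x_t$ passes the value $0$ to the limit.

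The genuine gap is in the simultaneous upgrade of the $\supseteq$ direction. Your assertion that ``density and openness propagate this to the claimed simultaneous equality'' is false as stated: two open subsets of $\bR_+\times\bR$ that agree on a countable dense set need not coincide (take $\bR_+\times\bR$ and $\bR_+\times(\bR\setminus\{\sqrt 2\})$ with $\mathcal D=\Q_+\times\Q$). Even adding the containment $A\subseteq C$ and monotonicity in $t$ does not help, because the obstruction is in the space variable: knowing $L^{q}_t>0$ for all rational $q$ near $x$ gives no lower bound on $L^{x}_t$, as a limit of strictly positive numbers may well be zero. You correctly flagged this step as the main obstacle, but joint continuity and openness alone are not enough to close it. One needs a further structural input --- for instance a Ray--Knight description of $x\mapsto L^x_t$, or an argument via crossing numbers that rules out zeros of $x\mapsto L^x_t$ strictly inside the range --- which is precisely the nontrivial content behind the cited corollary in Kallenberg.
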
 
	
	The last lemma we recall is a mild extension of the diffusion occupation time formula from \cite[Theorem~V.49.1]{RW2} as provided by \cite[Lemma~C.15]{CU22}.
	
	\begin{lemma}[\textup{\cite[Lemma~C.15]{CU22}}] \label{lem: DOTF}
		a.s.\ we have
		\begin{equation}\label{eq:020423a5}
			\int_0^t f (\U_s) \vd s = \int_{\s (J)} f (y) L^y_t (\U)\, \m^U (\rd y)
		\end{equation}
		simultaneously for all \(t \in [0, T]\) and all Borel functions
		\(f \colon \s (J) \to [0,\infty]\)
		with \(f (\s (\l)) = 0\) if~\(\l \in \cA\).
	\end{lemma}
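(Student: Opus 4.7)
The plan is to bootstrap from the classical diffusion occupation times formula of Rogers and Williams and upgrade it to hold simultaneously for all $(t, f)$ on a single event of full probability. More precisely, \cite[Theorem V.49.1]{RW2} delivers the identity \eqref{eq:020423a5} for a fixed time $t$ and a fixed continuous test function $f$ with compact support in $\s(J^\circ)$, outside a null set depending on $(t, f)$. From here the task splits into two independent extensions---first in $t$, then in $f$---followed by a separate treatment of the boundary contributions.

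To extend in $t$, I would observe that both sides of \eqref{eq:020423a5} are continuous and nondecreasing in $t$: the left side because $\U$ has continuous paths and $f$ is bounded; the right side because $t \mapsto L^y_t(\U)$ is continuous for each $y$ and one can dominate by the finite measure $f \vd \m^U$. Fixing a countable dense $D \subset [0, T]$ and passing to the limit yields the identity simultaneously for all $t \in [0, T]$ outside a null set still depending on $f$. To extend in $f$, I would fix a countable family $(f_n)$ of continuous compactly supported functions on $\s(J^\circ)$ that is dense in $C_c(\s(J^\circ))$ and in particular separates locally finite Borel measures. Outside a single null event, \eqref{eq:020423a5} holds for all $t$ and all $f_n$. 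On this event, for each $t$ the two $\omega$-dependent set functions
\[
A \mapsto \int_0^t \1_A(\U_s) \vd s, \qquad A \mapsto \int_A L^y_t(\U) \, \m^U(\rd y),
\]
define locally finite Borel measures on $\s(J^\circ)$ that coincide on the measure-determining family $(f_n)$, hence coincide as measures. A standard monotone class argument followed by monotone convergence then produces \eqref{eq:020423a5}, on $\s(J^\circ)$, for every Borel $f \colon \s(J^\circ) \to [0, \infty]$.

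It remains to handle the boundary contributions. For a reflecting endpoint $\l \in \mathcal{R}$, the atom $\m^U(\{\s(\l)\})$ encodes stickiness, and the strong Markov property together with the definition of the speed measure imply that the Lebesgue clock-time accumulated at $\s(\l)$ up to time $t$ equals $L^{\s(\l)}_t(\U)\, \m^U(\{\s(\l)\})$, so the boundary atom contributes identically to both sides. For an absorbing $\l \in \cA$, the atom $\m^U(\{\s(\l)\})$ is infinite while $L^{\s(\l)}_t(\U) = 0$ (no local time accumulates once the process is trapped), so the identity is only meaningful under the stated condition $f(\s(\l)) = 0$, under which the boundary contributions vanish on both sides. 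The main obstacle will be the bookkeeping at a sticky reflecting endpoint, where one must carefully identify the diffusion local time with the renormalized clock-time accumulated at the boundary; however, this identification is classical, following from the excursion-theoretic description of $\U$ near $\s(\l)$, and once it is in place the three extensions combine to produce the simultaneous-in-$(t,f)$ identity on a single null set.
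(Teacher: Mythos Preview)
The paper does not supply its own proof of this lemma: it is quoted verbatim as \cite[Lemma~C.15]{CU22} and introduced only as ``a mild extension of the diffusion occupation time formula from \cite[Theorem~V.49.1]{RW2}''. There is therefore no in-paper argument to compare against; your proposal is an attempt to reconstruct the proof that lives in the cited reference.

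On its own merits your outline is sound and follows the natural route. Two small points deserve tightening. First, when you extend in $t$ you assert that both sides are nondecreasing; this requires $f \ge 0$, whereas a generic $f \in C_c(\s(J^\circ))$ need not be. Continuity in $t$ alone already suffices for the density argument, so you should either drop the monotonicity claim or restrict to nonnegative $f$ from the outset (which is all the statement asks for anyway). Second, in the measure-identification step you need local finiteness of the right-hand measure $A \mapsto \int_A L^y_t(\U)\,\m^U(\rd y)$ on $\s(J^\circ)$ before you can invoke a measure-determining family; this follows because $y \mapsto L^y_t(\U)$ is bounded on compact subsets of $\s(J^\circ)$ (right-continuity in $y$ with left limits) and $\m^U$ is locally finite there, but it is worth saying so explicitly rather than relying on the identity you are in the process of proving. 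The boundary bookkeeping you describe is correct: at an absorbing endpoint the semimartingale local time vanishes while the speed atom is infinite, forcing the restriction $f(\s(\l))=0$; at a sticky reflecting endpoint the relation $\int_0^t \1_{\{\U_s=\s(\l)\}}\,\rd s = \m^U(\{\s(\l)\})\,L^{\s(\l)}_t(\U)$ is indeed the classical identification, and once stated the pieces assemble as you indicate.
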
 
	
	We now present the proofs for our main results. 
	
	\begin{proof}[Proof of Proposition~\ref{lem:150425a1}]
		Let us start with a general observation. Denote the Doob--Meyer decomposition of \(\S\) by \(\S = \S_0 + M + A\), where \(M\) is the local martingale and \(A\) is the finite variation part. Then, as \(V^H\) is of finite variation, 
		\[V^H - \int_0^\cdot H_s \vd A_s = \int^\cdot_0 H_s \vd M_s\] is a continuous local martingale of finite variation and hence, constant. 
		Consequently,
		\[
		V^H = \int_0^\cdot H_s \vd A_s.
		\]
		Now, we distinguish the cases where \(\l\) is absorbing or reflecting for \(\Y\). 
		
		\smallskip 
		\noindent
		{\em Case 1: \(\l\) is absorbing}. By Lemma~\ref{lem: ACU},
		\begin{align*}
			\rd A_t = e^{-rt}   \Big[ - r \q (U_t) \vd t + \frac{1}{2} \int_{\s (J^\circ)} \rd L^x_t (U) \, \q'' (\rd x) \Big].
		\end{align*}
		Using the occupation time formula for diffusions as given by Lemma~\ref{lem: DOTF}, we get that
		\[
		\q (\U_t) \1_{ \s (J^\circ)} (\U_t) \vd t = \int_{\s (J^\circ)} \q (x) \rd L^x_t (\U) \, \m^U (\rd x). 
		\]
		Hence, 
		\begin{align}
			V^H &= \int_0^\cdot H_t e^{- rt} \Big[ \1_{\{ U_t = \s (\l)\}} ( - r \l) \vd t + \int_{\s (J^\circ)} \vd L^x_t (\U) \, \Big(- r \q (x) \, \m^U (\rd x) + \tfrac{1}{2} \, \q'' (\rd x) \Big) \Big] 
			\notag
			\\
			\label{eq: first formula absorbing} 
			&= \int_0^\cdot H_t e^{- rt}  \, \Big[ \1_{[T_{\s (\l)}(\U), T]} (t) (- r \l) \vd t + \int_{\s (J^\circ)} \vd L^x_t (\U) \, \Big(- r \q (x) \, \m^U (\rd x) + \tfrac{1}{2} \, \q'' (\rd x) \Big) \Big], 
		\end{align}
		where we use that \(\l \in \cA\). As explained above, \(\int_0^\cdot H_s \vd M_s = 0\) and, in particular, \(H^2_t \vd \langle M \rangle_t = 0\). 
		Using the formula for \(\langle M \rangle\) from Lemma~\ref{lem: ACU}, and the occupation time formula for semimartingales, we get, for \(t < T_{\s (\l)} (\U)\),  
		\begin{equation} \label{eq: q' neq 0 irrelevant}
			\begin{split} 
				0 &= \int_0^t \frac{e^{2rt} \1_{\{\q' (\U_s) \neq 0\}}}{\big[\q' (\U_s) \big]^2} H^2_s \vd \langle M \rangle_s 
				\\&= \int_0^t H^2_s \frac{ \big[\q'_+ (\U_s) \big]^2 \1_{\{\q' (\U_s) \neq 0\}}}{\big[ \q' (\U_s) \big]^2 }  \vd \langle U \rangle_s 
				\\&= \int_0^t H^2_s \, \int_{\s (J^\circ)} \frac{\big[\q'_+ (x) \big]^2 \1_{\{\q' (x) \neq 0\}}}{\big[ \q' (x) \big]^2}  \vd L^x_s (\U) \vd x 
				\\&= \int_0^t H^2_s \, \int_{\s (J^\circ)} \1_{\{\q' (x) \neq 0 \}} \vd L^x_s (\U) \vd x. 
			\end{split}
		\end{equation} 
		Using this identity and the fact that \(\llambda\)-a.e. \(\1_{\{\q' = 0\}} \q'' = 0\) by Lemma~\ref{lem:270125a1}, \eqref{eq: first formula absorbing} reformulates to \eqref{eq_abs_continuity} and the formula is proved. 
		
		\smallskip
		\noindent
		{\em Case 2: \(\l\) is reflecting}. Again by Lemma~\ref{lem: ACU}, 
		\begin{align*}
			\rd A_t &= e^{-rt} \Big[ - r \q (U_t) \vd t+ \frac{1}{2} \q'_+ (\s (\l)) \vd L^{\s (\l)}_t (U) + \frac{1}{2} \int_{\s (J^\circ)} \rd L^x_t (U) \, \q'' (\rd x) \Big].
		\end{align*}
		As \(\alpha \in \mathcal{R}\), the occupation time formula for diffusions given by Lemma~\ref{lem: DOTF} yields that 
		\[
		\q (\U_t) \vd t = \int_{\s (J)}  \q (x) \vd L^x_t (\U) \, \m^U (\rd x), 
		\]
		and consequently, 
		\[
		\rd A_t = e^{-rt} \Big[ \big( \tfrac{1}{2} \q'_+ (\s (\l)) - r \l \m^U (\{ \s (\l) \}) \big)\vd L^{\s (\l)}_t (U) +  \int_{\s (J^\circ)} \rd L^x_t (U) \, \big( \tfrac{1}{2}\q'' (\rd x) - r \q (x) \, \m^U (\rd x) \big)\Big].
		\] 
		Finally, as \eqref{eq: q' neq 0 irrelevant} holds irrespective of the boundary classification of \(\l\), and \(\llambda\)-a.e. \(\1_{\{\q' = 0\}} \q'' = 0\), again by Lemma~\ref{lem:270125a1}, the formula \eqref{eq_abs_continuity} follows.
	\end{proof} 
	
	\begin{proof}[Proof of Theorem~\ref{theo: characterization}]
		First, we prove the necessity of the conditions (i)-(iii), assuming \(H \in \IP\). As the value process \(V^H\) of the increasing profit \(H\) is of finite variation, Lemma~\ref{lem: ACU} yields that  
		\[
		0 = \langle V^H \rangle_T = \int_0^T H^2_s \vd \langle \S \rangle_s = \int_0^T e^{- 2rs} \big( H_s \q'_+ (\U_s) \big)^2 \1_{\{s < T_{\s (\cA)} (\U)\}} \vd \langle \U\rangle_s. 
		\]
		Notice the use of $T_{\s(\cA)}(\U)$ rather than $T_{\s(\l)}(\U)$ in the above display. This comprises both cases of Lemma~\ref{lem: ACU} into one formula.
		The latter display implies that \(\rd \langle \U\rangle \otimes \P\)-a.e. \(H \q'_+ (\U) = 0\) on \([0, T_{\s (\cA)} (\U))\), i.e., part (i) holds. 
		We next establish part~(ii). 
		As \(V^H\) is of finite variation, we can apply Proposition~\ref{lem:150425a1}, which yields that
		\begin{align*}
			V^H_t = \int_0^t H_s e^{- rs} \int \Big( \1_{\s (J \, \setminus \, \cA)} (x) \, \rd L^x_s (U) + \1_{\s(\cA)} (x)  \1_{(T_{x} (\U), T]} (s) \vd s \Big) \, \nu (\rd x).
		\end{align*} 
		Using that \(\nu\) is concentrated on \(N_{\q' = 0}\), the identity \(\theta^2 = \1_{N_{\q' = 0}} (\U)\), the identities in \eqref{eq: Hahn decomposition}, and the occupation time formula for semimartingales, we compute that, for \(t \in [0, T]\), 
		\begin{equation} \label{eq: computation to mu integral} 
			\begin{split}
				V^H_t
				&= 	\int_0^t \theta_s H_s e^{- rs} \int \Big( \1_{\s (J \, \setminus \, \cA)} (x) \, \rd L^x_s (U) + \1_{\s(\cA)} (x)  \1_{(T_{x} (\U), T]} (s) \vd s \Big) \, |\nu| (\rd x)
				\\&= 	\int_0^t \theta_s H_s e^{- rs} \, \mu (\rd s, \cdot \,).
			\end{split}
		\end{equation} 
		As \(V^H\) is an increasing process, the same holds for
		\[
		\int_0^\cdot e^{rs} \vd V^H_s = \int_0^\cdot \theta_s H_s \, \mu(\rd s, \cdot \,).
		\]
		The standard measure theory yields that \(\mu \otimes \P\)-a.e. \(\theta H \geq 0\), which means that (ii) holds. 
		Finally, as \(\P(V^H_T > 0) > 0\), we also have 
		\[
		\P \Big( \int_0^T \theta_s H_s e^{- rs} \mu (\rd s, \cdot \,) > 0 \Big) > 0, 
		\]
		which clearly implies
		\[
		\P ( \mu (\{ t \in [0, T] \colon \theta_t H_t > 0 \}, \cdot \,) > 0 ) > 0. 
		\] 
		In summary, (i)-(iii) hold, completing the proof of the necessity direction. 
		
		\smallskip
		We turn to the proof of the converse direction, assuming that \(H \in L (\S)\) satisfies (i)-(iii). 
		In the following we show \eqref{eq: computation to mu integral}, 
		which implies \(H \in \IP\). More precisely, then (ii) implies that \(V^H\) is increasing and (iii) implies that it has a positive terminal value with positive probability. 
		
		As explained in the first part of this proof, to get \eqref{eq: computation to mu integral} it suffices to prove that \(V^H\) is of finite variation. Using property (i) and Lemma~\ref{lem: ACU}, we get that \(\langle V^H \rangle = 0\), which implies that \(V^H\) is of finite variation. 
	\end{proof} 
	
	\begin{proof}[Proof of Proposition~\ref{prop: suff}]
		By virtue of Theorem~\ref{theo: characterization}, it suffices to understand the implication
		\begin{equation} \label{eq: implication to show}
			\begin{split}
				\exists \, G \in \mathcal{B} (\s (J)) \colon | \nu | (G) > 0&, \, \text{a.s.} \ \theta H > 0 \text{ on } \{t \in [0, T] \colon U_t \in G \}
				\\&\implies 	\P ( \mu (\{ t \in [0, T] \colon \theta_t H_t > 0 \}, \cdot \,) > 0 ) > 0.
			\end{split}
		\end{equation} 
		We assume that \(G\) is as in \eqref{eq: implication to show}.
		By the monotone convergence theorem, 
		\[
		|\nu |(G \cap [\s (\l), R]) \to |\nu | (G) > 0, \quad R \nearrow \infty.
		\]
		Hence, there exists an \(R \in (\s (x_0), \infty)\) such that \(| \nu | (G \cap [\s (\l), R]) > 0\). 
		We now prove that 
		\begin{align} \label{eq: measure positive}
			\P (\mu ( \{ t \in [0, T] \colon U_t \in G \cap [\s (\alpha), R] \}, \cdot \,) > 0) > 0, 
		\end{align} 
		which implies the implication~\eqref{eq: implication to show} and hence, completes the proof. 
		Our argument is split into three steps. Before we start, recall that 
		\begin{align*}
			\mu (\{ t \in [0, T]\colon  & \U_t \in G \cap [\s (\l), R]\}, \cdot \,) 
			\\&= \int_{G \, \cap \, ([\s (\l), R] \setminus \s (\cA))} L^x_T (\U) \, |\nu| (\rd x) + |\nu (G \cap \s (\cA))| \big( (T_{\s (\l)} (\U)\vee T) - T_{\s(\l)} (\U)\big).
		\end{align*}  
		\smallskip 
		\noindent
		{\em Step 1:}
		If \(|\nu | (G \cap \s (\cA)) > 0\), then statement~\eqref{eq: measure positive} follows from Lemma~\ref{lem: BR},
		as the latter implies \(\P (T_{\s (\l)} (\U) < T) > 0\).
		
		In the following two steps, we will show that 
		\begin{align} \label{eq: positive local time}
			\P ( L^x_T(\U) > 0 \text{ for all } x \in  [\s (\l), R] \setminus \s (\cA)  ) > 0,
		\end{align} 
		which entails \eqref{eq: measure positive}
		whenever \(| \nu |(G \cap \s (\cA)) = 0\), as then \(|\nu| \big( G \cap ([ \s (\l), R] \setminus \s (\cA))  \big) > 0\).
		
		\smallskip
		\noindent
		{\em Step 2:} We now show that a.s. on \(\{T_{\s (\l)}(\U) < T,\, T_{R} (\U) < T\}\), 
		\[
		L^x_T (\U) > 0 \text{ for all \(x \in [\s (\l), R] \setminus \s(\cA)\)}.
		\] 
		To see this, recall from \cite[Theorem~33.9]{Kal21} that there exists a Brownian motion \(B = (B_s)_{s \geq 0}\) (possibly on an extended probability space) such that a.s. \(\U_t = B_{\gamma_t}\) for \(t \in [0, T]\), where
		\[
		\gamma_t \triangleq \inf \Big\{ s \geq 0 \colon \int_{\s(J)} L^{x}_s (B) \, \m^\U (\rd x) > t \Big\}. 
		\]
		Moreover, it is easy to see that \(t \mapsto \gamma_t\) is a.s. strictly increasing on the set \([0, T_{\s (\cA)} (\U))\). We now distinguish the cases where \(\l\) is reflecting or absorbing. 
		
		{\em Case 1: \(\l \in \mathcal{R}\)}. On the set \(\{T_{\s (\l)}(\U) < T,\, T_{R} (\U) < T\}\), we have a.s. \(\gamma_T > T_{\s (\l)} (B) \vee T_{R} (B)\) and hence, by standard properties of Brownian paths, a.s. 
		\[
		\min_{s \in [ 0, \gamma_T ]} B_s < \s (\l), \quad \max_{s \in [0, \gamma_T]} B_s > R.
		\] 
		Consequently, Lemma~\ref{lem: kallenberg} yields that, a.s. on \(\{T_{\s (\l)}(\U) < T,\, T_{R} (\U) < T\}\), \(L_T^x (\U) = L_{\gamma_T}^x (B) > 0\) for all \(x \in [\s (\l), R]\). 
		
		{\em Case 2: \(\l \in \cA\)}. While we still have a.s. \(\gamma_T > T_R (B)\) on \(\{T > T_R (\U)\}\), the difference to the previous case is that only a.s. \(\gamma_T = T_{\s (\l)} (B)\) on \(\{T > T_{\s (\l)} (\U) \}\). Thus, we can only conclude that, a.s. on  \(\{T_{\s (\l)}(\U) < T,\, T_{R} (\U) < T\}\),
		\[
		\min_{s \in [ 0, \gamma_T ]} B_s = \s (\l), \quad \max_{s \in [0, \gamma_T]} B_s > R.
		\] 
		Using Lemma~\ref{lem: kallenberg}, we still get, a.s. on \(\{T_{\s (\l)}(\U) < T,\, T_{R} (\U) < T\}\), \(L_T^x (\U) > 0\) for all \(x \in (\s (\l), R]\), which is precisely what we claimed.
		
		\smallskip
		\noindent
		{\em Step 3:} We now prove that \(\P (T_{\s (\l)} (\U) < T, \, T_R (\U) < T ) > 0\). 
		Using the strong Markov property of \(\U\) and Lemma~\ref{lem: BR}, we get that
		\begin{equation*} \begin{split} 
				\P (T_{\s (\l)} (\U) < T,\, T_{R} (\U) < T) 
				&\geq \P (T_{\s (\l)} (\U_{\cdot + T_{R} (\U)}) < T / 2, \, T_{R} (\U) < T/2 ) 
				\\&= \P (T_{\s (\l)} (\U) < T / 2 \mid \U_0 = R) \P ( T_{R} (\U) < T / 2) > 0.
		\end{split} \end{equation*}
		In summary, \eqref{eq: positive local time} follows and the proof complete. 
	\end{proof} 
	
	\begin{proof}[Proof of Theorem~\ref{thm_main}]
		The implication (iii) \(\implies\) (i) is trivial, and (i) \(\implies\) (ii) follows by contraposition: if \(| \nu | \equiv 0\), then \(\nu \equiv 0\) and Proposition~\ref{lem:150425a1} yields that \(\IP = \emptyset\). 
		
		It remains to prove the implication (ii) \(\implies\) (iii). 
		We use Proposition~\ref{prop: suff}, verifying \ref{prop: suff}-(i)-(iii) for \(H = \theta\).
		Using the occupation time formula for semimartingales, we obtain that 
		\begin{align*}
			\int_0^T | \theta_s \q'_+ (\U_s) | \vd \langle \U \rangle_s = \int \1_{N_{\q' = 0}} (x) |\q'_+ (x)| L^x_T (\U) \vd x = \int \1_{\{\q'_+ (x) = 0\}} |\q'_+ (x)| L^x_T (\U) \vd x = 0, 
		\end{align*}
		which proves \ref{prop: suff}-(i).
		Next, \ref{prop: suff}-(ii) holds trivially, since \(\theta^2 \geq 0\).
		Finally, take \(G \in \mathcal{B} (\s (J))\) with $|\nu| (G) > 0$.
		Then, \(| \nu | (G \cap N_{\q' = 0}) > 0\), because \(| \nu |\) is concentrated on \(N_{\q' = 0}\), and since \(\theta^2 = \1_{N_{\q' = 0}} (\U) > 0\) on \(\{ t \in [0, T] \colon \U_t \in G \cap N_{\q' = 0} \}\), \ref{prop: suff}-(iii) holds. The proof is complete. 
	\end{proof}

	\section{Examples} \label{sec: examples}
	The structure of \(\theta\) and \(\nu\) connects the existence of increasing profits to path properties of our general diffusion \(\Y\), explaining how they can generate strategies that are increasing profits. 
	In the following, we illustrate these connections through a variety of examples.
	For reader's convenience, we recall that $\nu$ is the signed measure on $(\s(J),\mathcal B(\s(J)))$ that is given by the formula
	\begin{align*}
		\nu (\rd x) = - & \, \1_{\{\q'= 0\} \, \cap \, \s (J^\circ)} (x) r \q ( x) \fmu (x) \vd x \\
		&+ \1_{\s (J^\circ)} (x) \big[ \tfrac{1}{2} \q''_\si (\rd x) - r \q ( x) \m^{U}_{\si} (\rd x)  \big]
		\\&- \1_{\{\alpha \, \in\, \mathcal{A}\}} r\alpha \, \delta_{\s (\alpha)} (\rd x)
		- \1_{\{\beta \, \in\, \mathcal{A}\}} r\beta \, \delta_{\s (\beta)} (\rd x)
		\\
		&+ \1_{\{\l \, \in \, \mathcal{R}\}} (\tfrac{1}{2}\q'_+ (\s (\l)) - r \l \m^{U} (\{\s(\l)\})) \, \delta_{\s (\l)} (\rd x) 
		\\
		&+ \1_{\{\r \, \in \, \mathcal{R}\}} (- \tfrac{1}{2} \q'_- (\s (\r)) - r \r \m^{U} (\{ \s(\r)\}))\, \delta_{\s (\r)} (\rd x). 
	\end{align*}
	
	We start discussing the situation where \(\Y\) attains boundary points, which possibly activates some of the last three lines in the definition of \(\nu\). The first example discusses cases with absorbing boundaries,
	which turns out to be the case with the easiest interpretation.
	
	\begin{example}[Engelbert--Schmidt diffusion market model]
		Consider a classical SDE model
		\[
		\rd \Y_t = b (\Y_t) \vd t + \sigma (\Y_t) \vd W_t, \quad \Y_0 = y_0 \in J^\circ \triangleq (\l, \r), 
		\] 
		where \(W = (W_t)_{t \in [0, T]}\) is a Brownian motion and the coefficients \(b \colon J^\circ \to \bR\) and \(\sigma \colon J^\circ \to \bR\) are Borel functions that satisfy the Engelbert--Schmidt conditions
		$$
		\forall \, x \in J^\circ: \;
		\sigma ( x ) \neq 0, \quad
		\frac{1+|b|}{\sigma^2} \in L^1_\textup{loc} (J^\circ). 
		$$
		We assume that \((\Y, W)\) is a weak solution to the above SDE and we stipulate that \(\Y\) gets absorbed in \(\{\l, \r\}\) when it hits this set.
		This convention is in conjunction with the classical Engelbert--Schmidt theory \cite{EngSch}; see Chapter~5.5 in \cite{KarShr} for a textbook presentation.
		The sets \(\cA\) and \(J\triangleq J^\circ \cup \cA\) are determined via $b$ and $\sigma$ by Feller's test for explosion; see \cite[Proposition~5.5.29]{KarShr}.
		Moreover, we need to choose $b$ and $\sigma$ in such a way that we get $J\subset\bR$
		(notice that the latter is necessary for Standing Assumption~\ref{SA: semi + boundary}),
		i.e.,
		explosions at infinite boundary points are not allowed.
		To be more specific, with 
		\[
		\s (x) \triangleq \int^x \exp \Big\{ - \int^y \frac{2 b (z)}{\sigma^2 (z)} \vd z \Big\} \vd y \text{ for }  x \in (\l, \r), \text{ and } \s (\l) \triangleq \lim_{y \searrow \l} \s (y), \ \, \s (\r) \triangleq \lim_{y \nearrow \r} \s (y), 
		\]
		for every \(b \in \{ \l, \r \} \setminus \bR\) one of the following two items must hold: 
		\begin{enumerate} 
			\item[(a)] \(| \s (b) | = \infty\);
			\item[(b)] $|\s(b)|<\infty$ and, for every non-empty interval \(I \subset J^\circ\) with \(b\) as endpoint, \[\int_I \frac{ | \s (b) - \s (y) | }{\s' (y) \sigma^2 (y)} \vd y = \infty. \]
		\end{enumerate}
		This guarantees that the solution process \(Y\) cannot reach the values \(\pm \infty\). In the same spirit, we characterize
		\[
		\cA = \big\{ b \in \{ \l, \r \} \cap \bR \colon \text{both (a) and (b) above fail}\, \big\}.  
		\]
		It is well-known (\cite{EngSch}) that \(\Y\) is a general diffusion with scale function \(\s\) and speed measure 
		\[
		\m (\rd x) \triangleq \frac{\rd x}{\s' (x) \sigma^2 (x)}\text{ on } \mathcal{B} (J^\circ), \qquad \m (\{ b \} ) \triangleq \infty
		\;\;\forall \, b \in \cA. 
		\]
		Lastly, let us comment on Standing Assumption~\ref{SA: semi + boundary}, i.e., when $\Y$ is a semimartingale.
		By definition of a weak solution up to explosion (see Definition~5.5.20 in~\cite{KarShr}), the solution process \(Y\) is always a semimartingale on the stochastic interval \([0, T_{\cA} (Y))\). However, it is a delicate point that the semimartingale property can get lost at the hitting time of a boundary point,
		see the counterexamples in \cite[Section~4]{MU2015}. 
		We give a flavor for conditions on \(b\) and \(\sigma\) that entail Standing Assumption~\ref{SA: semi + boundary}: if \(\l \in J\) but \(\r \not \in J\), then it holds if and only if 
		\[
		\int_{\l + } \frac{| \s (y) - \s (\l) |}{\s' (y)} \frac{| b (y) |}{\sigma^2 (y)} \vd y < \infty
		\]
		(cf. \cite{CinJacProSha80} or \cite[Corollary~3.6]{MU2015}).

		Next, it is straightforward to prove that the function $\q=\s^{-1}$ is continuously differentiable with absolutely continuous derivative.
		In particular, the measures $\q''(\rd x)$ and $\m^\U=\m\circ\s^{-1}$ are absolutely continuous w.r.t. the Lebesgue measure.
		Moreover, as, for all \(x \in \s (J^\circ)\),
		\[ 
		\q' (x) = \exp \Big\{ \int^{\q (x)} \frac{2 b (z)}{\sigma^2 (z)} \vd z \Big\} > 0,
		\] 
		we obtain that
		\[
		\nu (\rd x) = - \sum_{b\in\cA} rb \, \delta_{\s (b)} (\rd x). 
		\] 
		As
		a consequence of Theorem~\ref{thm_main}, NIP holds if and only if \(rb = 0\) for all
		\(b \in \cA = J \setminus J^\circ\). This recovers the well-known fact that NIP holds in the zero interest rate regime, but we also observe that there are increasing profits in the non-zero interest rate regime when \(\Y\) has non-zero (necessarily absorbing) boundary points. 
		For suitable choices of \(N_+, N_-\), depending on the sign of \(rb \neq 0\),
		$b\in\cA$,
		and
		\(N_{\q' = 0} = \s(\cA)\), we find that 
		\[
		\theta = - \sum_{b \in \cA} \, \on{sgn} (rb) \1_{\{ \U\, =\, \s (b) \} } = - \sum_{b \in \cA} \, \on{sgn} (rb) \1_{[T_{\s (b)} (\U), T]}, 
		\]
		is an increasing profit with value process 
		\[
		V^\theta_t = \sum_{b \in \cA} \int_{T_{\s (b)} (\U) }^{T_{\s (b)} (\U) \vee\, t} |br| e^{- rs} \vd s
		= \sum_{b \in \cA} \big| b ( e^{- r T_{\s (b)} (\U) } - e^{- r (T_{\s (b)} (\U) \vee \, t)}) \big|.
		\]
		To
		get the idea behind these results, assume that \(rb \neq 0\) for some $b\in\cA$.
		In this case, on the time interval \([T_{\s (b)} (\U), T]\), the discounted price process \(\S_t = e^{- rt} \Y_t\) is non-constant and either increasing or decreasing. Thus, we achieve an increasing profit either by buying or selling the risky asset at time \(T_{\s (b)} (\U)\). This is exactly what \(\theta\) suggests (and its sign accounts for buying or selling). 
	\end{example}
	
	We now turn to the case with reflecting boundary points, which deals with the final two terms of the auxiliary signed measure \(\nu\). We notice that these have two different ingredients, namely \(\q'_{\pm} (\s (b))\) and \(r b \m^U (\{ \s (b) \})\), where \(b \in \mathcal{R}\). Both of these terms turn out to be related to local time effects, but with a quite different flavor. The first one represents local time terms in the drift that account for reflection from the boundary, while the second term measures stickiness in the reflecting boundary. 
	
	\begin{example}[Black--Scholes model with reflection] \label{ex: BS reflection}
		Consider a version of the Black--Scholes model with reflection from a positive boundary. This model has been studied in \cite{BDH23}, where it was shown that NIP fails for this model in the zero interest rate regime. 
		We introduce the model through its scale function and speed measure, taking
		\(J = [1, \infty)\), $\mu\in\bR$, $\sigma>0$ and 
		\begin{align*}
			\s (x) &= \int^x y^{- 2 \mu / \sigma^2} \vd y, \ x \in [1, \infty),
			\\
			\m (\rd x) &=
			\frac{x^{2 \mu / \sigma^2 - 2}}{\sigma^2} \vd x
			\text{ on } \mathcal{B} ((1, \infty)), \quad \m (\{ 1 \}) \in [0, \infty).
		\end{align*} 
		The value \(\m (\{ 1 \}) \in [0, \infty)\) decides about the reflective behavior of the model. If \(\m (\{ 1\} ) = 0\) the reflection is instantaneous, as in the paper \cite{BDH23}, and if \(\m (\{ 1 \} ) > 0\), the boundary point \(1\) is sticky reflecting. 
		Again, the second derivative measure \(\q''(\rd x)\) is absolutely continuous w.r.t. the Lebesgue measure, \(\q' > 0\) with
		\(\q'_+ (\s(1)) = 1\),
		and we get that 
		\begin{align*}
			\nu (\rd x) = \Big(\frac{1}{2} - r \m (\{ 1 \}) \Big)\, \delta_{\s (1)} (\rd x),
		\end{align*}
		as, clearly, $\m^\U(\{\s(1)\})=\m(\{1\})$.
		Using Theorem~\ref{thm_main}, we obtain that 
		\[
		\text{NIP holds} \quad \Longleftrightarrow \quad r \m ( \{ 1 \} ) = \frac{1}{2}.
		\]
		In particular, if \(r = 0\) or \(\m (\{ 1 \}) = 0\), there exists an increasing profit, covering the observation from \cite{BDH23}.
		If NIP fails, 
		\[
		\theta =
		\on{sgn} \Big(\frac{1}{2} - r \m (\{ 1 \}) \Big)
		\1_{\{\Y \, = \, 1\}}
		\] 
		is an increasing profit with value process 
		\[
		V^\theta_t = \Big| \frac{1}{2} - r \m (\{ 1 \}) \Big| \, \int_0^t e^{- r s} \vd L^{\s (1)}_s (\U) = \Big| \frac{1}{2} - r \m (\{ 1 \}) \Big| \, \int_0^t e^{- rs} \vd L^1_s (\Y), \quad t \in [0, T]. 
		\]
		Notice that $L^1(\Y)=L^{\q(\s(1))}(\q(\U))=L^{\s(1)}(\U)$
		due to $\q'_+(\s(1))=1$ and \cite[Exercise VI.1.23]{RevYor} together with Lemma~\ref{lem: isf DC}
		(the latter justifies the application of \cite[Exercise VI.1.23]{RevYor}). The structure of \(\theta\) explains that increasing profit can only be made on the set \(\{ t \in [0, T] \colon \Y_t = 1\}\). Namely, depending on the sign of \(\frac{1}{2}  - r \m (\{ 1 \})\), buying or selling while \(\Y\) is in its reflecting state~\(1\) yields an increasing profit. This observation is also reflected by the fact that, whenever \(H\) is an arbitrary increasing profit, the value process of \(H\) is given by
		\[
		V^H_t = \int_0^t H_s e^{- rs} \Big( \frac{1}{2} - r \m (\{ 1 \}) \Big) \vd L^1_s (\Y), \quad t \in [0, T]
		\]
		(cf. Proposition~\ref{lem:150425a1}).
		Let us discuss more intuitions behind this observation.
		As will be shown in a forthcoming paper \cite{CUZ_25+},
		the dynamics of \(\Y\) can be described via an SDE with constraints, namely 
		\begin{align}\label{eq: dynamics sticky reflection}
			\rd \Y_t = \Y_t \1_{\{\Y_t \neq 1\}} \Big( \mu \vd t + \sigma \vd W_t \Big) + \frac{1}{2} \vd L^1_t (\Y), \quad \1_{\{\Y_t = 1\}} \vd t = \m (\{ 1 \}) \vd L^1_t (\Y).
		\end{align} 
		By the integration by parts formula and the side constraint, 
		\begin{align*}
			\rd \S_t
			&=e^{- rt} \Y_t \1_{\{\Y_t \neq 1\}} \Big( \mu \vd t + \sigma \vd W_t \Big)
			+ \frac{1}{2} \, e^{- rt} \vd L^1_t (\Y) - r e^{- rt} \Y_t \vd t
			\\&=e^{- rt} \Y_t \1_{\{\Y_t \neq 1\}} \Big( (\mu - r) \vd t + \sigma \vd W_t \Big)
			+ e^{- rt} \, \Big( \frac{1}{2} - r \m (\{ 1 \}) \Big) \vd L^1_t (\Y).
		\end{align*} 
		The first term provides another reasoning for our observation that any increasing profit must be supported on \(\{t \in [0, T]\colon \Y_t = 1\}\), as otherwise the martingale part gets activated, while the second term explains the condition \(r \m (\{ 1 \}) \neq 1 / 2\), as otherwise \(\S\) is constant on the set \(\{ t \in [0, T] \colon \Y_t = 1\}\).
		Notice also that in case \(r \m ( \{ 1 \} ) = 1 / 2\) the local time effects from skewness and stickiness cancel each other, and that then no increasing profit exists.
	\end{example} 
	
	In the previous example, there was a boundary point \(b\) with \(\q'_{\pm} (\s (b)) \neq 0\), leading to a local time term in the drift.
	The following example illustrates that this term might also be inactive, initiating increasing profits solely through the stickiness at the reflecting boundary. 
	
	\begin{example}[Shifted generalized square Bessel process of dimension \(\delta \in (0, 2)\)]\label{ex: SSQB}
		We consider a shifted generalization of the square Bessel process of low dimension \(\delta \in (0, 2)\) that features sticky reflection, while the classical square Bessel process only allows for instantaneous reflection. 
		We define \(\Y\) with state space \(J = [1, \infty)\) through scale and speed given by 
		\begin{align*}
			\s (x) &= (x - 1)^{1 - \delta / 2}, \ x \in [1, \infty), \\
			\m (\rd x) &=
			\frac{(x - 1)^{\delta/2 - 1}}{4 (1-\frac\delta2)} \vd x
			\text{ on } \mathcal{B}((1, \infty)), \quad \m (\{ 1 \} ) \in [0, \infty).
		\end{align*}
		In this case, the inverse scale function is given by 
		\begin{align*}
			\q (x) = x^{1 / (1 - \delta / 2)} + 1, \quad x \in \s ([1, \infty)) = \bR_+.
		\end{align*} 
		Using $1/(1-\delta/2)>1$, it follows that \(\q'_+ (\s (1)) = \q'_+ (0) = 0\), and the auxiliary signed measure \(\nu\) is given by the formula
		\[
		\nu (\rd x) = - r \m (\{ 1 \} ) \, \delta_{0}(\rd x).
		\]
		As a consequence,
		$$
		\text{NIP holds}
		\quad\Longleftrightarrow\quad
		r \m (\{ 1 \}) = 0.
		$$
		In the case \(r \m (\{ 1 \}) \neq 0\), we may take 
		\[
		\theta =
		- \on{sgn} (r \m (\{ 1 \}) ) \1_{\{ \Y \, = \, 1 \}},
		\]
		explaining that investing while \(\Y\) is in the reflecting boundary leads to an increasing profit. 
		To provide some heuristic intuition, in contrast to the previous example, \(\q'_+ (0) = 0\) deactivates the local time term in the dynamics that accounts for reflection (relating this to the previous example, this corresponds to the term \(\rd L^1_t (\Y) / 2\) in \eqref{eq: dynamics sticky reflection}). Still, a local time term arises through stickiness (as in the previous example, this corresponds to the second equation in \eqref{eq: dynamics sticky reflection}).\footnote{To stress even more the difference from the previous example, we notice that, in the present example, there is no increasing profit in the instantaneously reflecting case $\m(\{1\})=0$.}
		In the end, we have
		\[
		\1_{\{ \Y_t \, = \, 1\}} \vd \S_t = e^{- rt} (- r \m (\{ 1 \})) \vd L^{0}_t (\U),  
		\]
		which is either increasing or decreasing, and non-constant if \(r \m (\{ 1 \}) \neq 0\).
	\end{example}
	
	Next, we discuss the influence of the second term from \(\nu\).
	Recall that $N_{\on{si}}\in\mathcal B(\s(J^\circ))$ denotes a Lebesgue-null set such that
	$\nu(A\cap N_{\on{si}})=\nu_{\on{si}}(A)$ for all $A\in\mathcal B(\s(J^\circ))$.
	First consider the case where $N_{\on{si}}$ consists of one point, say $a\in\s(J^\circ)$.
	Again, the singular parts \(\q''_{\on{si}}(\rd x)\) and \(\m^\U_{\on{si}}(\rd x)\) account for local time terms that occur in the drift of $\S$.
	In the case of \(\m^\U_{\on{si}}(\rd x)\) these come from sticky behavior of $\Y$ at the point $\q(a)$, precisely as this was the case with reflecting boundaries. For \(\q''_{\on{si}}(\rd x)\) the situation is different, and the local time terms arise from the desire of the process $\Y$ to leave the point $\q(a)$ in a preferred direction.
	The following two examples illustrate these phenomena.
	
	\begin{example}[Bachelier model with stickiness] \label{ex: sticky Bach}
		We consider an extension of the classical Bachelier model,
		where the price of the risky asset is a Brownian motion with a sticky point $\xi \in \bR$, i.e., the general diffusion on $\IR $ with scale and speed defined by 
		\begin{equation}
			\s(x) = x, \quad 
			\m(\rd x) = \vd x + \rho \, \delta_{\xi}(\rd x),
		\end{equation}
		where $\rho \ge 0 $ is the so-called stickiness parameter.
		To give some intuition on this process, the amount of time the process spends at $\xi $ is governed by the relation
		\begin{equation}
			\int_{0}^{t}\indic{\{ \Y_s = \xi \}} \vd s = \rho L^{\xi}_t(\Y), \quad \text{for all } t\ge 0,
		\end{equation}
		and, in the case \(\rho> 0\), is of positive Lebesgue measure, as soon as the threshold $\xi $ is reached. 
		
		Then, the auxiliary measure is given by
		$$
		\nu(\rd x) = - r\xi \rho\, \delta_{\xi}(\rd x),
		$$ 
		and Theorem~\ref{thm_main} shows that
		$$
		\text{NIP holds}
		\quad\Longleftrightarrow\quad
		r\xi\rho=0.
		$$
		In the case $r\xi\rho\ne0$, the canonical increasing profit is given by
		\(\theta = - \on{sgn} (r \xi) \1_{\{ Y_t \, =\, \xi \}}\),
		and its value process reads
		\[
		V^\theta_t = \int_0^t |r \xi| \rho e^{- rs} \vd L^{\xi}_s (\Y),
		\quad t\in[0,T].
		\]
		Similarly, by Proposition~\ref{lem:150425a1}, whenever \(H\) is an increasing profit, its value process is given by
		\[
		V^H_t = - \int_0^t H_s r \xi \rho e^{- rs} \vd L^{\xi}_s (\Y),
		\quad t\in[0,T].
		\]
		We notice that increasing profits can only be made in the sticky point.
		To give some intuition on this example, we recall that $\Y $ solves the SDE system involving the local time (see \cite{EngPes}):
		\begin{equation}
				\vd \Y_t = \indic{\{ \Y_t \not = \xi \}} \vd B_t, \quad
				\indic{\{ \Y_t = \xi \}} \vd t = \rho \vd L^{\xi}_t (\Y), 
		\end{equation}
		where $B$ is a Brownian motion.
		Integration by parts yields that the discounted price process $\S $ follows the dynamic
		\begin{align}
			\vd \S_t &= \vd (e^{-rt} \Y_t) = \indic{\{ \Y_t \not = \xi \}} \left( -r \S_t \vd t + e^{-rt} \vd B_t \right)
			- r \xi \rho e^{-rt} \vd L^{\xi}_t (\Y).
		\end{align}
		Therefore, $\S $ has a local time drift which, for the same reason as in Examples~\ref{ex: BS reflection} and \ref{ex: SSQB},
		initiates an increasing profit.
	\end{example}
	
	\begin{example}[Bachelier model with skewness]
		\label{ex: skew}
		We take a look at the Bachelier model with skewness, as considered in \cite{Rossello2012}. In other words, we suppose that \(\Y\) is a Brownian motion with skewness at zero, i.e., the state space is given by \(J = \bR\) and scale and speed are defined through
		\begin{align*}
			\s ( x ) &= \begin{cases} (1 - \kappa) x, & x \geq 0, \\ \kappa x,& x < 0, \end{cases}
			\\
			\m (\rd x) &=
			(1 - \kappa)^{-1} \1_{\{x \geq 0\}} \vd x + \kappa^{-1} \1_{\{ x < 0\}} \vd x,
		\end{align*}
		where \(\kappa \in (0, 1)\setminus\{1/2\}\) is the so-called skewness parameter. To provide some intuition, if \(\kappa < 1 / 2\) (resp., $\kappa>1/2$), then the process \(\Y\) has the tendency to leave the origin downwards (resp., upwards).
		We notice that \(\q' > 0\) $\llambda$-a.e.
		and \(\q''_{\on{si}} (\rd x) = (2 \kappa - 1)/(\kappa (1 - \kappa)) \, \delta_0 (\rd x)\). Consequently, 
		\[
		\nu (\rd x) = \frac{2 \kappa - 1}{2 \kappa (1 - \kappa)} \, \delta_0 (\rd x), 
		\]
		and Theorem~\ref{thm_main} shows that
		\(\theta = \on{sgn} (2\kappa-1) \1_{\{\Y = 0\}}\)
		is an increasing profit, independently of the interest rate. The value process of \(\theta\) is given by 
		\[
		V^\theta_t = \int_0^t \frac{|2 \kappa - 1|}{2\kappa (1 - \kappa)} e^{- rs} \vd L^{0}_s (\U)
		= \int_0^t \Big| 1 - \frac{1}{2 \kappa} \Big| e^{- rs} \vd L^0_s (\Y),
		\quad t\in[0,T],
		\] 
		where we use \cite[Exercise VI.1.23]{RevYor} for the last identity.
		Similarly, whenever \(H\) is an increasing profit, its value process reads
		\[
		V^H_t = \int_0^t \Big( 1 - \frac{1}{2 \kappa}\Big) H_s e^{- rs} \vd L^0_s (\Y),
		\quad t\in[0,T].
		\]
		We notice that increasing profits can only be made in the skew point. 
		To provide some intuition for the origin of increasing profits, recall from \cite[Exercise~X.2.30]{RevYor} that the process \(\Y\) has an SDE representation of the form 
		\[
		\rd \Y_t = \rd W_t + \Big( 1 - \frac{1}{2 \kappa}\Big) \vd L^0_t (\Y).
		\]
		This formula explains that \(\Y\) has a local time drift, which stems from the fact that \(\q''_{\on{si}} (\{ 0 \}) \neq 0\). For the same reason as in the Examples~\ref{ex: BS reflection}, \ref{ex: SSQB} and \ref{ex: sticky Bach}, this drift initiates an increasing profit.
	\end{example} 
	
	In general, the phenomena that stem from the second term from \(\nu\) are richer than the one described in the previous example.
	But, in any case, the corresponding increasing profits are made on the set
	$\{t\in[0,T] \colon \U_t\in N_{\on{si}}\}$
	(cf.~\eqref{eq: theta}),
	where we also recall that $N_{\on{si}}\subset\s(J^\circ)$ is
	Lebesgue null.\footnote{To provide a specific example where the sets of the form $\{t\in[0,T] \colon \U_t=a\}$ with $a\in\s(J^\circ)$ do not suffice,
		consider $r\ne0$ and a general diffusion $\Y$ with $J=\bR$ on natural scale (in particular, $\Y=\U$)
		with the speed measure that does not have atoms but has a nonvanishing singular component $\m^{\U}_{\on{si}}(\rd x)$ concentrated on the Cantor set.}
	This is different from the influence of the first term from \(\nu\), which appears to be the most curious one and is discussed in the next example.
	
	\begin{example}[Increasing profits made on a set of positive Lebesgue measure] \label{ex: positive lebesgue measure} 
		We start by constructing a scale function on \(J = \bR\), following an idea from \cite[Lemma~2.1]{CU24}.
		Take a closed set $F\subset[0,1]$ with empty interior such that $\llambda(F)>0$.
		This could be a \emph{fat} Cantor set or, alternatively, one could construct such a set as follows (cf. \cite[Example~1.7.6]{bogachev}).
		Let $\{q_n\colon n\in\mathbb N\}$ be an enumeration of all rational points in $[0,1]$.
		Take $a\in(0,1)$ and a positive sequence $\{r_n\colon n\in\mathbb N\}$ such that
		$\sum_{n=1}^\infty 2r_n\le a$.
		It is easy to verify that
		$F\triangleq[0,1]\setminus G$,
		where $G\triangleq\bigcup_{n\in\mathbb N}(q_n-r_n,q_n+r_n)$,
		satisfies the requirements.
		Now, we set 
		$$
		\q (x) \triangleq \int_0^x d_F (z) \vd z,
		\quad x \in \bR,
		\quad d_F (z) \triangleq \inf_{y \in F} |z - y|.
		$$
		Notice that $\q$ is a $C^1$-function on $\bR$ with
		\begin{equation}\label{eq:070225a1}
			\{x\in\bR\colon\q'(x)=0\}=F
		\end{equation}
		(because $z\mapsto d_F (z)$ is continuous and $F$ is closed) and
		$\q$ is strictly increasing (because $F$ is closed and does not contain any open interval).
		Let \(\U\) be a Brownian motion and define \(\Y \triangleq \q (\U)\), 
		which is a general diffusion
		with state space $\q(\bR)=\bR$, scale function $\s\triangleq\q^{-1}$ and speed measure $\llambda\circ\q^{-1}$.
		Furthermore, as $\q'=d_F(\cdot)$ is Lipschitz continuous on $\bR$,
		hence absolutely continuous (in particular, $\q''_{\on{si}}(\rd x)\equiv0$),
		then $\q'$ is of locally finite variation, showing that $\q$ is a difference of two convex functions on $\bR$.
		This explains that Standing Assumption~\ref{SA: semi + boundary} is satisfied.
		As $\m^\U=\llambda$, we also have $\m^\U_{\on{si}}(\rd x)\equiv0$, hence
		\[
		\nu (\rd x) = - r \q (x) \1_F (x) \vd x. 
		\] 
		By Theorem~\ref{thm_main},
		$$
		\text{NIP holds}
		\quad\Longleftrightarrow\quad
		r = 0.
		$$
		In the case $r\ne0$, we may take
		\(
		\theta =  - \on{sgn} (r) \1_{F} (\U),
		\)
		which is an increasing profit with the value process
		\begin{align*}
			V^\theta_t
			&=
			\int_0^t |r| e^{- rs} \int_F \q (x) \vd L^x_t (\U) \vd x
			\\
			&=
			\int_0^t |r| e^{- rs} \q(\U_s) \1_F (\U_s) \vd \langle \U \rangle_s
			\\
			&=
			\int_0^t |r| e^{- rs} \Y_s \1_F (\U_s) \vd \langle \U \rangle_s,\quad t\in[0,T].
		\end{align*}
		In contrast to the previous examples, increasing profits are not made while \(\Y\) attains a discrete set of points, but while \(\U\) is in the uncountable set \(F\) with positive Lebesgue measure. Another crucial difference is that in this setting the value processes of increasing profits are not integrals w.r.t. a local time process,
		but w.r.t. the quadratic variation process $\langle\U\rangle$.
	\end{example}

	\section{On the Relation of Increasing Profits and the Representation Property}\label{sec: RP}
	
	Example~\ref{ex: positive lebesgue measure} illustrated a very peculiar form of increasing profits whose value processes are integrals w.r.t. the quadratic variation measure \(\rd \langle \U\rangle\). As we will encounter in this section, for a variety of general diffusion models, the existence of such increasing profits is intrinsically connected to the failure of the so-called representation property (RP),
	which is of fundamental importance both in the context of market completeness (see, e.g., \cite[Section VII.2.d]{shir}) and from the viewpoint of the general theory of stochastic processes (see, e.g., \cite[Sections III.4.c-d]{JacodShiryaev2003}).
	
	To study this connection formally, we define $(\cF^{S}_{t})_{t\in [0,T]} $ to be the right-continuous natural filtration of \(\S\), that is, \(\cF^\S_t \triangleq \bigcap_{s \in (t, T]} \sigma (\S_r, r \leq s)\), for all $t\in [0,T)$, and $\cF^\S_T \triangleq \sigma (\S_r, r \leq T)$.
	Recall from Stricker's lemma (\cite[Theorem~9.19]{jacod79}) that \(\S\) is not only an \((\mathcal{F}_t)_{t \in [0, T]}\)-semimartingale, but also an \((\cF^S_t)_{t \in [0, T]}\)-semimartingale.
	
	\begin{definition}[Representation property]
		We say that the {\em representation property (RP)} holds for the semimartingale \(\S\) if every \((\cF^\S_t)_{t \in [0, T]}\)-local martingale \(M = (M_t)_{t \in [0, T]}\) has a representation 
		\[
		M = M_0 + \int_0^\cdot H_s \vd \S^c_s, 
		\]
		where \(H\) is an \((\cF^\S_t)_{t \in [0, T]}\)-predictable process such that a.s. \(\int_0^T H^2_s \vd \langle \S\rangle_s < \infty\) and \(\S^c\) is the continuous \((\cF^S_t)_{t \in [0, T]}\)-local martingale part of~\(\S\).
	\end{definition}
	
	Using the main result from \cite{CU24}, in our general diffusion framework, the RP can be described in terms of the inverse scale function \(\q\).
	We recall that $\{\q'=0\}$ denotes an arbitrary Borel subset of $\s(J^\circ)$
	that differs from the set
	$\{x\in\s(J^\circ):\q'_+(x)=0\}$
	(abbreviated as $\{\q'_+=0\}$)
	on a Lebesgue-null set;
	cf. Remark~\ref{rem:251125a1}.
	
	\begin{lemma} \label{lem: RP}
		The RP holds for the semimartingale \(\S\) if and only if \(\llambda (\q' = 0) = 0\). 
	\end{lemma}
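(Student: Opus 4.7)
The plan is to reduce the characterization of the RP for $S$ to the RP characterization for the underlying general diffusion $Y$ given in \cite{CU24}, by exploiting the fact that the deterministic discount factor $e^{-rt}$ does not affect either the natural filtration or the zero sets of the continuous martingale part.

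First, I would observe that, since $S_t = e^{-rt} Y_t$ differs from $Y_t$ by a strictly positive deterministic continuous multiplier, the right-continuous natural filtrations coincide: $\cF^S_t = \cF^Y_t$ for all $t \in [0,T]$. In particular, a process is an $(\cF^S_t)$-local martingale if and only if it is an $(\cF^Y_t)$-local martingale, and the class of local martingales to be represented is the same in both filtrations.

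Next, by integration by parts, $\rd S_t = e^{-rt}\, \rd Y_t - r e^{-rt} Y_t \vd t$, so the continuous local martingale parts in the common filtration satisfy $\rd S^c_t = e^{-rt}\, \rd Y^c_t$, where $Y^c$ denotes the continuous martingale part of $Y$. Since $e^{-rt}$ is strictly positive and bounded on $[0,T]$, the map $H \mapsto e^{-r\cdot} H$ is a bijection between $L(Y^c)$ and $L(S^c)$, with $\int_0^\cdot H_s \vd Y^c_s = \int_0^\cdot H_s e^{rs} \vd S^c_s$. Consequently, $S$ has the RP in $(\cF^S_t)$ if and only if $Y$ has the RP in $(\cF^Y_t)$.

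Finally, the main result of \cite{CU24} characterizes the RP for the general diffusion $Y = \q(U)$, under Standing Assumption~\ref{SA: semi + boundary}, precisely by the condition $\llambda(\q' = 0) = 0$. Combining this with the previous reduction yields Lemma~\ref{lem: RP}. The main technical point will be the bookkeeping in the second step: one must verify that the bijection between $L(Y^c)$ and $L(S^c)$ preserves the integrability conditions and produces pathwise equal stochastic integrals. This is routine because $e^{-rt}$ is deterministic, continuous, and bounded away from zero on $[0,T]$. The substantive content of the lemma comes from the external characterization in \cite{CU24}, which I would invoke as a black box.
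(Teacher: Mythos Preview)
Your proposal is correct and follows essentially the same approach as the paper: both observe that $\cF^S_t = \cF^Y_t$, use integration by parts to get $\rd S^c_t = e^{-rt}\,\rd Y^c_t$, conclude that the RP for $S$ is equivalent to the RP for $Y$, and then invoke \cite[Theorem~2.1]{CU24}. Your write-up even spells out the bijection $H \mapsto e^{-r\cdot}H$ between $L(Y^c)$ and $L(S^c)$ a bit more explicitly than the paper does.
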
 
	\begin{proof}
		Clearly, the right-continuous natural filtration
		$(\cF^\S_t)_{t \in [0, T]}$
		of $\S$ coincides with
		the right-continuous natural filtration
		$(\cF^\Y_t)_{t \in [0, T]}$ of~$\Y$.
		Using that 
		\[
		\rd \S_t = e^{- rt} \vd \Y_t - r \S_t \vd t, 
		\] 
		we observe that the continuous \((\cF^\S_t)_{t \in [0, T]}\)-local martingale part of $\S$ is given by \(\rd \S^c_t = e^{- rt} \vd \Y^c_t\), where \(Y^c\) is the continuous \((\cF^\Y_t)_{t \in [0, T]}\)-local martingale part of~\(Y\).
		
		Consequently, the RP holds for \(\S\) if and only if it holds for \(\Y\). As the RP for \(\Y\) is equivalent to \(\llambda (\q' = 0) = 0\) by \cite[Theorem~2.1]{CU24}, the claim follows. 
	\end{proof}
	
	By virtue of formula~\eqref{eq:150425a4} for the signed measure \(\nu\),
	the condition \(\llambda (\q' = 0) > 0\), characterizing 
	the failure of the RP for $\S$,
	is closely related to the existence of certain increasing profits, like the one illustrated by Example~\ref{ex: positive lebesgue measure}.
	We now describe this relation precisely.
	
	\begin{definition}
		We call an increasing profit \(H \in \IP\) a {\em quadratic variation increasing profit (QVIP)} if a.s. \(\rd V^H \ll \rd \langle \U\rangle\).
		We denote
		\[
		\QVIP \triangleq \big\{ H \in L (\S) \colon H \text{ is a QVIP} \big\}.
		\]
	\end{definition} 
	It is instructive to relate this definition to Lemma~\ref{lem: no domination}, which entails that the value processes of increasing profits cannot be dominated by the quadratic variation measure \(\rd \langle \S \rangle\). As we have seen in Example~\ref{ex: positive lebesgue measure}, this is not the case for \(\rd \langle \U \rangle\) and QVIPs may exist.
	
	Recall that \(\nu |_{\s (J^\circ)} = \nu_{\textup{ac}} + \nu_{\textup{si}}\) denotes the Lebesgue decomposition of \(\nu|_{\s (J^\circ)}\) w.r.t. the Lebesgue measure and that \(N_{\on{si}} \in \mathcal{B} (\s (J^\circ))\) is a Lebesgue-null set such that \(\nu (A \cap N_{\on{si}}) = \nu_\textup{si} (A)\) for all \(A \in \mathcal{B}( \s (J^\circ) )\).
	
	The existence of a QVIP can be characterized in the spirit of Theorem~\ref{thm_main}.
	We define the strategy 
	\[
	\otheta \triangleq \1_{(N_+ \, \cap \, \{ \q'_+ = 0\} ) \, \setminus\, N_{\on{si}}} (\U) - \1_{(N_- \, \cap \, \{ \q'_+ = 0 \}) \,  \setminus\, N_{\on{si}}} (\U), 
	\]
	which takes over the role of \(\theta\) in Theorem~\ref{thm_main}.
	
	\begin{lemma} \label{lem: QVRP}
		The following are equivalent:
		\begin{enuroman}
			\item A QVIP exists.
			\item There exists a \(G \in \mathcal{B} (\s (J^\circ))\) such that \(|\nu_{\on{ac}}|(G)>0\).
			\item \(\otheta \in \IP\).
			\item \(\otheta \in \QVIP\). 
		\end{enuroman}
		In particular, $\QVIP=\emptyset$ is equivalent to $|\nu_{\on{ac}}|\equiv0$.
	\end{lemma}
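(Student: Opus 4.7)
The argument parallels that of Theorem~\ref{thm_main}, with the extra task of tracking which pieces of $V^H$ (as decomposed by Proposition~\ref{lem:150425a1}) are absolutely continuous with respect to $\rd \langle \U \rangle$. As in Section~\ref{sec:proofs}, I restrict to $\s(J) = [\s(\l), \infty)$; the other cases are analogous.

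The starting point is to split $\nu$ into its four structural components: (a) its absolutely continuous part $\nu_{\on{ac}}(\rd x) = f_{\on{ac}}(x) \vd x$ on $\s(J^\circ)$, which by~\eqref{eq:150425a4} is supported on $\{\q'_+ = 0\}$; (b) its singular part $\nu_{\on{si}}$ on $\s(J^\circ)$, concentrated on the Lebesgue-null set $N_{\on{si}}$; (c) its atoms at reflecting boundary points; and (d) its atoms at absorbing boundary points. Inserting this decomposition into the formula of Proposition~\ref{lem:150425a1} and applying the semimartingale occupation time formula, the (a)-contribution to $V^H$ rewrites as $\int_0^\cdot H_s e^{-rs} f_{\on{ac}}(\U_s) \vd \langle \U \rangle_s$, which is automatically dominated by $\rd \langle \U \rangle$. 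The three other contributions are each pairwise singular with $\rd \langle \U \rangle$: for (b) and (c), because for any Lebesgue-null $A$ the occupation time formula gives $\int_0^T \1_A(\U_s) \vd \langle \U \rangle_s = \int_A L^x_T(\U) \vd x = 0$, while the local times on $A$ are generally nonzero; for (d), because $\langle \U \rangle$ is frozen after absorption, whereas Lebesgue time is not.

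Now (iv) $\Rightarrow$ (i) is trivial. For (i) $\Rightarrow$ (ii), take a QVIP $H$: the hypothesis $V^H \ll \rd \langle \U \rangle$ forces the three singular contributions (b)--(d) to vanish, so $V^H$ equals the (a)-contribution, and $\P(V^H_T > 0) > 0$ then forces $|\nu_{\on{ac}}| \not\equiv 0$. For (ii) $\Rightarrow$ (iv), I verify that $\otheta$ is a QVIP. By construction, $\otheta$ is supported on $\{\U \in \{\q'_+ = 0\} \setminus N_{\on{si}}\} \subset \{\U \in \s(J^\circ) \setminus N_{\on{si}}\}$, which kills the contributions (b), (c) and (d). Moreover, since $\nu_{\on{si}}$ is concentrated on $N_{\on{si}}$, the Hahn decomposition $\nu = \nu_+ - \nu_-$ restricted off $N_{\on{si}}$ reduces to $\nu_{\on{ac}}$, yielding $f_{\on{ac}} \ge 0$ $\llambda$-a.e.~on $N_+ \setminus N_{\on{si}}$ and $f_{\on{ac}} \le 0$ $\llambda$-a.e.~on $N_- \setminus N_{\on{si}}$, so that $\otheta_s f_{\on{ac}}(\U_s) = |f_{\on{ac}}(\U_s)|$ on the support of $\otheta$. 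The occupation time formula then yields
\[
V^{\otheta}_t = \int_0^t e^{-rs} |f_{\on{ac}}(\U_s)| \1_{\{\q'_+ = 0\} \setminus N_{\on{si}}}(\U_s) \vd \langle \U \rangle_s,
\]
which is increasing and $\rd \langle \U \rangle$-absolutely continuous. For $\P(V^{\otheta}_T > 0) > 0$, rewrite $V^{\otheta}_T = \int |f_{\on{ac}}(x)| \1_{\{\q'_+ = 0\} \setminus N_{\on{si}}}(x) L^x_T(\U) \vd x$: condition (ii), together with the facts that $|\nu_{\on{ac}}|$ vanishes off $\{\q'_+ = 0\}$ and $\llambda(N_{\on{si}}) = 0$, yields $|\nu_{\on{ac}}|(\{\q'_+ = 0\} \setminus N_{\on{si}}) > 0$, and the local time positivity arguments from the proof of Proposition~\ref{prop: suff} (based on Lemmas~\ref{lem: BR} and~\ref{lem: kallenberg}) deliver the required positivity.

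Finally, (iv) $\Rightarrow$ (iii) is immediate from $\QVIP \subset \IP$; for (iii) $\Rightarrow$ (iv), note that the derivation of the formula for $V^{\otheta}$ above used only the \emph{support} structure of $\otheta$, not hypothesis (ii), so $V^{\otheta}$ is always $\rd \langle \U \rangle$-absolutely continuous, whence $\otheta \in \IP$ forces $\otheta \in \QVIP$. I expect the main bookkeeping obstacle to be juggling the interplay between the Hahn decomposition $N_+ \sqcup N_-$, the singular set $N_{\on{si}}$, and the set $\{\q'_+ = 0\}$ cleanly enough that $\otheta$ kills exactly the $\rd \langle \U \rangle$-singular contributions while activating $\nu_{\on{ac}}$ with the correct sign.
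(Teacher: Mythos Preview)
Your proposal is correct and follows essentially the same route as the paper's proof. Both arguments rely on Proposition~\ref{lem:150425a1} to expand $V^H$, then isolate the $\rd\langle\U\rangle$-absolutely continuous piece (your ``(a)-contribution'') from the pieces supported on the $\rd\langle\U\rangle$-null set $\{t:\U_t\in N_{\on{si}}\cup\s(J\setminus J^\circ)\}$; the paper phrases this as multiplying by $\1_{N^c_{\on{si}}}(\U)$ and using $V^H=\int_0^\cdot\1_{N^c_{\on{si}}}(\U_s)\,\rd V^H_s$, while you phrase it as a Lebesgue decomposition of $\rd V^H$, but the content is identical. The only organizational difference is that the paper routes (ii)$\Rightarrow$(iii) through Proposition~\ref{prop: suff} and then proves (iii)$\Rightarrow$(iv) separately, whereas you establish (ii)$\Rightarrow$(iv) in one step and handle (iii)$\Leftrightarrow$(iv) at the end; your direct computation of $V^{\otheta}$ is exactly the paper's display~\eqref{eq: VP QVIP}.
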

	
	It is worth recalling that $|\nu_{\on{ac}}|\equiv0$ is equivalent to $\nu_{\on{ac}}\equiv0$ (cf. the paragraph after Theorem~\ref{thm_main}).
	
	\begin{proof}
		In the proof we use the notation $N^c_{\on{si}}\triangleq\s(J^\circ)\setminus N_{\on{si}}$.
		
		The implication (iv) \(\implies\) (i) is trivial.
		Let us show that (i) \(\implies\) (ii).
		Assume that \(H \in \IP\) satisfies \(\rd V^{H}_t\ll \rd \langle \U\rangle_t\).
		By the semimartingale occupation time formula, $\langle\U\rangle_t = \int L_t^x(\U) \vd x$, that is,
		\(\rd V^H_t \ll \int \rd L^x_t (\U) \vd x\).
		As $\s(J\setminus J^\circ)\cup N_{\on{si}}$ is a Lebesgue-null set,
		then, by Proposition~\ref{lem:150425a1},
		we obtain that
		\begin{equation*} \begin{split}
				V^{H} &= \int_0^\cdot \1_{N^c_{\on{si}}} (\U_s) \vd V^H_s 
				\\&= \int_0^\cdot H_s e^{- rs} \int_{N^c_{\on{si}}} \vd L^x_s (\U) \, \nu (\rd x) 
				\\&= \int_0^\cdot H_s e^{- rs} \int \vd L^x_s (\U) \, ( \1_{(N_+ \, \cap \, \{ \q'_+ = 0 \}) \setminus N_{\on{si}}} - \1_{(N_- \, \cap \, \{ \q'_+ = 0 \}) \setminus N_{\on{si}}} ) \, | \nu | (\rd x)
				\\&=  \int_0^\cdot H_s e^{- rs} \int \vd L^x_s (\U) \, ( \1_{(N_+ \, \cap \, \{ \q'_+ = 0 \}) \setminus N_{\on{si}}} - \1_{(N_- \, \cap \, \{ \q'_+ = 0 \}) \setminus N_{\on{si}}} ) \, | \nu_{\on{ac}} | (\rd x)
				\\&= \int_0^\cdot \otheta_s H_s e^{- r s} \int \rd L^x_t (\U)\, | \nu_{\on{ac}} | (\rd x).
			\end{split} 
		\end{equation*}
		As \(H \in \IP\) implies \(\P (V^H_T > 0) > 0\), arguing by contraposition, it follows that \(| \nu_{\on{ac}} |\) cannot be the zero measure, which is equivalent to (ii).
		
		\smallskip
		Next, the implication (ii) \(\implies\) (iii) follows directly from Proposition~\ref{prop: suff}. To be more specific, notice that
		\(\otheta \q'_+(\U) = 0\), \(\otheta \theta = \1_{\{\q'_+ = 0\} \, \setminus \, N_{\on{si}}} (U) \geq 0\) and \(> 0\) on \(\{t \in [0, T] \colon U_t \in \{q'_+ = 0\} \, \setminus \, N_{\on{si}} \}\).
		As 
		\[
		G \in \mathcal{B} (\s (J^\circ)), \, | \nu_{\on{ac}} | (G) > 0 \quad \implies \quad | \nu | (G \cap \{\q'_+ = 0\} \cap N^c_{\on{si}}) = | \nu_{\ac} | (G) > 0, 
		\] 
		we conclude that (i)-(iii) from Proposition~\ref{prop: suff} hold.
		
		\smallskip			
		Finally, we prove the implication (iii) \(\implies\) (iv). 
		If \(\otheta \in \IP\), using Proposition~\ref{lem:150425a1} and the occupation time formula for semimartingales, we get that
		\begin{equation}
			\label{eq: VP QVIP} \begin{split}
				V^{\otheta} &= \int_0^\cdot e^{- rs} \int_{\{ \q'_+ = 0 \} \, \cap \, N^c_{\on{si}}} \vd L^x_s (\U) \, (\1_{N_+} (x) - \1_{N_-} (x)) \, \nu (\rd x)
				\\&= \int_0^\cdot e^{- rs} \int_{\{ \q'_+ = 0 \} \, \cap \, N^c_{\on{si}}} \vd L^x_s (\U) \, | \nu | (\rd x)
				\\&= \int_0^\cdot e^{- rs} \int_{\{ \q'_+ = 0\}} \vd L^x_s (\U) \, | \nu_{\on{ac}} | (\rd x) 
				\\&= \int_0^{\cdot} e^{- rs} \int  | r \q (x) | \m^U_{\on{ac}} (x) \1_{\{ \q' = 0\}} \vd L^x_s (\U) \vd x
				\\&= \int_0^\cdot e^{-rs} | r \q (\U_s) | \m^U_{\on{ac}} (\U_s) \1_{\{\q'  = 0\}} (\U_s) \vd \langle \U\rangle_s, 
			\end{split}
		\end{equation}
		which is the value process of a QVIP. This means that \(\otheta\) is a QVIP, i.e., (iv) holds.	
		This concludes the proof. 
	\end{proof}

	\begin{corollary} \label{coro: QVIP; RP}
		A QVIP exists if and only if \(r \neq 0\) and \(\llambda (\m^U_{\ac} > 0, \q' = 0) > 0\).
	\end{corollary}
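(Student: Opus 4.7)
The strategy is to combine Lemma~\ref{lem: QVRP} with a direct inspection of the Lebesgue decomposition of $\nu$. By Lemma~\ref{lem: QVRP}, the existence of a QVIP is equivalent to $|\nu_{\on{ac}}| \not\equiv 0$, so the task reduces to extracting the absolutely continuous part of $\nu |_{\s(J^\circ)}$ from formula~\eqref{eq:150425a4} and translating $|\nu_{\on{ac}}| \not\equiv 0$ into the claimed condition on $r$ and on the set $\{\m^U_{\ac} > 0,\, \q' = 0\}$.

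First, I would read off the six terms of~\eqref{eq:150425a4}. The four boundary terms are Dirac masses at points of $\s(J\setminus J^\circ)$, so they live outside $\s(J^\circ)$ and contribute nothing to $\nu_{\on{ac}}$ (recall $\nu_{\on{ac}}$ is the Lebesgue-absolutely-continuous part of $\nu|_{\s(J^\circ)}$). The terms $\frac12 \q''_{\si}(\rd x)$ and $- r\q(x) \m^U_{\si}(\rd x)$ in the second line are, by definition, singular w.r.t.\ $\llambda$. Hence, the only contribution to $\nu_{\on{ac}}$ comes from the first line:
\[
\nu_{\on{ac}}(\rd x) = - \1_{\{\q' = 0\} \, \cap\, \s(J^\circ)}(x)\, r\, \q(x)\, \m^U_{\ac}(x)\,\vd x.
\]

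Second, I would express $|\nu_{\on{ac}}| \not\equiv 0$ as a concrete condition. If $r = 0$, then clearly $|\nu_{\on{ac}}| \equiv 0$. If $r \neq 0$, then $|\nu_{\on{ac}}| \not\equiv 0$ is equivalent to
\[
\llambda\bigl(\{x \in \s(J^\circ) \colon \q(x)\, \m^U_{\ac}(x) \neq 0,\ \q'(x) = 0\}\bigr) > 0.
\]
Since $\q$ is strictly increasing, the set $\{x \in \s(J^\circ) \colon \q(x) = 0\}$ contains at most one point and is therefore Lebesgue-null; removing it yields the equivalent condition $\llambda(\m^U_{\ac} > 0,\ \q' = 0) > 0$. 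Combining the cases $r = 0$ and $r \neq 0$, we conclude by Lemma~\ref{lem: QVRP} that a QVIP exists if and only if $r \neq 0$ and $\llambda(\m^U_{\ac} > 0,\ \q' = 0) > 0$.

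There is no real obstacle here: once Lemma~\ref{lem: QVRP} is in hand, the corollary is a purely bookkeeping exercise in identifying which components of~\eqref{eq:150425a4} survive the Lebesgue decomposition. The only point requiring a small remark is the elimination of the factor $\q(x)$, which is handled by the strict monotonicity of $\q$.
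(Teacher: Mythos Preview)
Your proposal is correct and follows exactly the paper's approach: the paper's proof is a single sentence invoking Lemma~\ref{lem: QVRP} and the structure of $\nu_{\on{ac}}$ from~\eqref{eq:150425a4}, together with the observation that $\q$ vanishes in at most one point. Your write-up simply spells out these steps in more detail.
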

	
	The result is a direct consequence of Lemma~\ref{lem: QVRP} and the structure of $\nu_{\on{ac}}$
	(recall~\eqref{eq:150425a4} and notice that the factor $\q(x)$ vanishes at most in one point, hence does not matter).
	
	\begin{discussion}
		(a)
		Corollary~\ref{coro: QVIP; RP} allows us to relate the failure of the RP for $\S$ to the existence of a QVIP. Namely, if, for example, \(\m^U_{\on{ac}} > 0\) on \(\{\q' = 0\}\), then it follows from Corollary~\ref{coro: QVIP; RP} and Lemma~\ref{lem: RP} that a QVIP exists if and only if \(r \neq 0\) and the RP fails. 
		
		More precisely, Corollary~\ref{coro: QVIP; RP} and Lemma~\ref{lem: RP} show that, if \(r \neq 0\), the failure of the RP for $\S$ is a necessary condition for the existence of a QVIP. To give an example where it is not sufficient, assume that \(r \neq 0\), \(\s (J) = \bR\) and that \(\m^U\)
		is a discrete measure concentrated
		on the set of rational numbers \(\mathbb{Q}\). Then, \(\m^U_{\on{ac}} = 0\) and no QVIP exists by Corollary~\ref{coro: QVIP; RP}, irrespectively of the choice of \(\q\).
		However, taking \(\q\) as in Example~\ref{ex: positive lebesgue measure}, the RP for $\S$ fails by Lemma~\ref{lem: RP}. 
		
		\smallskip
		(b)
			Corollary~\ref{coro: QVIP; RP} can be related to Lemma~\ref{lem: no domination}. Indeed, by Lemma~\ref{lem: ACU}, a.s. \(\rd \langle \S \rangle \sim \rd \langle \U \rangle\) on \(\{ t \in [0, T] \colon \q' (\U_t) > 0 \}\). In view of Lemma~\ref{lem: no domination}, this means that a QVIP can only be made on the set \(\{ t \in [0, T] \colon \q' (U_t) = 0 \}\), which implies that \(\{ \q' = 0 \}\) has to have positive Lebesgue measure by the semimartingale occupation time formula.

		\smallskip
		(c)
		Finally, we provide some intuition behind the connection between the failure of the RP for $\S$ and the existence of a QVIP.
		To this end, we first observe that
		\begin{equation}\label{eq:241125a1}
			\text{the RP for $\S$ fails}
			\quad\Longleftrightarrow\quad
			\text{the local martingale }
			\int_0^\cdot \1_{\{\q' = 0\}} (\U_s) \vd \U_s^c
			\text{ is non-constant.}
		\end{equation}
		Indeed, by the semimartingale occupation time formula together with Lemma~\ref{lem: kallenberg}, the latter is seen to be equivalent to $\llambda(\q'=0)>0$, which is, in turn, equivalent to the failure of the RP for $\S$ by Lemma~\ref{lem: RP}.\footnote{Alternatively,
			it is instructive to deduce the implication
			($\Longleftarrow$) in~\eqref{eq:241125a1} from Lemma~\ref{lem: ACU}, which, in particular, explains that the local martingale
			$\int_0^\cdot \1_{\{\q' = 0\}} (\U_s) \vd \U_s^c$
			cannot be represented as a \(\rd \S^c\)-integral if it is non-constant.}
		The nontriviality of the local martingale in~\eqref{eq:241125a1}
		is equivalent to the fact that \(\int_0^T \1_{\{\q' = 0\}} (\U_s) \vd \langle \U\rangle_s > 0\) with positive probability.
		By virtue of~\eqref{eq: VP QVIP}, this is necessary for the existence of a QVIP
		(and even sufficient if $r\ne0$ and $\m^\U_{\on{ac}}>0$ on $\{\q'=0\}$).
	\end{discussion}

	\bibliographystyle{siam}
	\bibliography{bibfile}
	
\end{document}